\theoremstyle{definition}
\newcommand{\mystrut}{\raisebox{-6pt}{}}
\newcommand{\power}{\mathcal P} %
\newcommand{\dist}{\mathcal D} %
\newcommand{\V}{\mathcal{V}}
\newcommand{\ev}{\textit{ev}}
\newcommand{\dotminus}{\mathbin{\text{\@dotminus}}}
\newcommand{\@dotminus}{%
  \ooalign{\hidewidth\raise1ex\hbox{.}\hidewidth\cr$\m@th-$\cr}%
}
\newcommand{\pmet}{\mathbf{PMet}}
\newcommand{\dpmet}{\mathbf{DPMet}}
\newcommand{\Ccat}{\mathbf{C}}
\newcommand{\alg}[1]{\mathbf{EM}(#1)}
\newcommand{\pos}{\mathbf{Pos}}
\newcommand{\set}{\mathbf{Set}}
\DeclareMathOperator{\cl}{\mathsf{cl}}
\newcommand{\eqnum}{\refstepcounter{equation}\textup{\tagform@{\theequation}}}
\DeclareMathOperator{\lo}{\mathsf{log}}
\DeclareMathOperator{\be}{\mathsf{beh}}
\newcommand{\sem}[1]{\llbracket #1 \rrbracket}
\newcommand{\tobject}{\Omega}
\newcommand{\elements}[1]{\int #1}
\DeclareMathOperator{\kant}{\kappa}
\newcommand{\EMlog}{\mathcal L_{\textrm{EM}}}
\newcommand{\qlog}{\mathcal L_{\Lambda}}
\title{Expressive Quantale-Valued Logics for Coalgebras: an
  Adjunction-Based Approach}
\author{Harsh Beohar}{University of Sheffield, United Kingdom}{h.beohar@sheffield.ac.uk}{https://orcid.org/0000-0001-5256-1334}{}
\author{Sebastian Gurke}{Universität Duisburg-Essen, Germany}{sebastian.gurke@uni-due.de}{https://orcid.org/0009-0008-4343-1384}{}
\author{Barbara K{\"o}nig}{Universität Duisburg-Essen, Germany}{barbara\_koenig@uni-due.de}{https://orcid.org/0000-0002-4193-2889}{}
\author{Karla Messing}{Universität Duisburg-Essen, Germany}{karla.messing@uni-due.de}{https://orcid.org/0009-0003-1019-6449}{}
\author{Jonas Forster}{Friedrich-Alexander-Universität Erlangen-Nürnberg,
  Germany}{jonas.forster@fau.de}{https://orcid.org/0000-0002-5050-2565}{}
\author{Lutz Schr{\"o}der}{Friedrich-Alexander-Universität Erlangen-Nürnberg,
  Germany}{lutz.schroeder@fau.de}{https://orcid.org/0000-0002-3146-5906}{}
\author{Paul Wild}{Friedrich-Alexander-Universität Erlangen-Nürnberg,
  Germany}{paul.wild@fau.de}{https://orcid.org/0000-0001-9796-9675}{}
\authorrunning{H. Beohar, S. Gurke, B. K{\"o}nig, K. Messing,
  J. Forster, L. Schr{\"o}der, P. Wild}
\begin{document}

\maketitle              %

\begin{abstract}
  We address the task of deriving fixpoint equations from modal logics
  characterizing behavioural equivalences and metrics (summarized
  under the term \emph{conformances}). We rely on an earlier work that
  obtains Hennessy-Milner theorems as corollaries to a fixpoint
  preservation property along Galois connections between suitable
  lattices. We instantiate this to the setting of coalgebras, in which
  we spell out the compatibility property ensuring that we can derive
  a behaviour function whose greatest fixpoint coincides with the
  logical conformance. We then concentrate on the linear-time case,
  for which we study coalgebras based on the machine functor living in
  Eilenberg-Moore categories, a scenario for which we obtain a
  particularly simple logic and fixpoint equation. The theory is
  instantiated to concrete examples, both in the branching-time
  case (bisimilarity and behavioural metrics) and in the linear-time
  case (trace equivalences and trace distances).

  \keywords{modal logics, coalgebras, behavioural equivalences,
    behavioural metrics, linear-time semantics, Eilenberg-Moore
    categories}
\end{abstract}

\section{Introduction}

Behavioural equivalences (such as bisimilarity and trace equivalence)
are an important technique to identify states with the same behaviour
in a transition system~\cite{g:linear-branching-time}.  They have been
complemented by notions of behavioural metrics
\cite{GiacaloneEA90,bw:behavioural-pseudometric,dgjp:metrics-labelled-markov}
measuring the distance between states, in particular in a quantitative
setting. We work in a coalgebraic setting~\cite{Rutten00} that allows
us to answer generic questions about behavioural equivalences and
metrics, parametrized over various branching types (non-deterministic,
probabilistic, weighted, etc.).

There are various ways to characterize behavioural equivalences or
metrics, which we illustrate using trace equivalence as an example:
(i) \emph{direct specification:} Two states $x,y$ are trace equivalent
if they admit the same traces; (ii) \emph{logic:} $x,y$ are trace
equivalent if they cannot be distinguished in a modal logic based on
diamond modalities and the constant \emph{true}; (iii)
\emph{fixpoint equation:} $x,y$ are trace equivalent if the pair
$(\{x\},\{y\})$ is contained in the greatest fixpoint of the
bisimulation function on the determinized transition system; (iv)
\emph{games:} there is an attacker-defender game characterizing the
equivalence.

Our focus is on~(ii) and~(iii), and our take is quite different from
the usual approach: Instead of first defining the behavioural
equivalence/metric and then setting up an expressive logic for it, we
start by defining the logic and derive a fixpoint equation from the
logic. Fixpoint equations are of great
interest %
since efficient algorithms for computing behavioural conformances are
almost always based on fixpoint characterizations; in future work,
we aim to exploit such characterizations for algorithmic purposes.
However, for a given logic, corresponding fixpoint equations do not
always exist, and we give conditions for ensuring that they do.
We use the Galois connection approach from
\cite{bgkm:hennessy-milner-galois} as a starting point, and instead of
instantiating it for each case study, we propose a generic coalgebraic
framework. By employing fibrations (resp.\ indexed categories)
\cite{jacobs-fibrations,Jacobs_indexedcat}, we parameterize over the
notion of \emph{conformance} (e.g.\ equivalence, metric) that is our
focus of attention. Moreover, we parametrize over a quantale in which
both conformances and formulae take their values.

One interest, particularly, is in linear-time notions of conformance (such as
trace/language equivalences and their quantitative cousins), for which
we work in an Eilenberg-Moore category where the coalgebras live. We
exploit the generalized powerset construction~\cite{jss:trace-determinization,sbbr:powerset-coalgebraically} and
characterize those (trace) logics that can be turned into suitable
fixpoint equations on the determinized coalgebra, using a notion of
compatibility \cite{bgkm:hennessy-milner-galois} that has its roots in
up-to techniques \cite{p:complete-lattices-up-to}. We also study the
relation of compatibility to the notion of depth-1 separation used in
(quantitative) graded logics
\cite{DorschEA19,fswbgkm:graded-logics-em}.

More concretely, we work with coalgebras of the form
$c\colon X\to FTX$, living in some category $\mathbf{C}$, where a
monad $T$ intuitively specifies the implicit branching (or side
effects) and a functor $F$ describes the explicit branching type. For
instance, for a non-deterministic automaton we choose $T=\power$ and
$F = \_^\Sigma\times \mathbf{2}$. We fix an \emph{EM-law}
$\zeta\colon TF\Rightarrow FT$ \cite{jss:trace-determinization}
allowing us to obtain a determinized coalgebra, i.e., a coalgebra of
the form $c^\#\colon TX\to FTX$ that can be viewed as a coalgebra in
the Eilenberg-Moore category of $T$.  We can then define a logic
function $\lo$ that is defined on sets of $\V$-valued predicates on
$X$ and whose least fixpoint induces a behavioural conformance on
$TX$. Alternatively, given the determinization $c^\#$, we can -- in a
fibrational style -- define a conformance on $TX$ as the greatest
fixpoint of a Kantorovich lifting followed by a reindexing via
$c^\#$. The aim is to show that both conformances coincide.

We allow arbitrary constants in the logic, which -- in particular in
the linear-time case -- are able to add extra distinguishing power to
the logic. Along the way we give an answer to the question of why,
unlike branching-time logics, linear-time logics often do not need
any additional (boolean) operators, only modalities and
constants.

As examples we consider bisimilarity and branching-time pseudometrics
for probabilistic transition systems, as well as linear-time
conformances such as trace equivalence and trace distance.

\smallskip

\noindent\emph{Roadmap.} After reviewing preliminaries
in \cref{sec:preliminaries}, we summarize the approach based on Galois
connections (adjunctions) in \cref{sec:adjoint-logic}. The
instantiation to generic coalgebras is presented in
\cref{sec:adjointlogic} and a concrete quantale-valued branching-time
logic spelled out in
\cref{sec:quantale-branching-logic}. \cref{sec:adjointlogicEM}
specializes to coalgebras in Eilenberg-Moore categories, leading to
results strengthening those for the general case. Finally,
\cref{sec:examples} details the linear-time case studies mentioned
above, and we conclude in \cref{sec:conclusion}.

\section{Preliminaries}
\label{sec:preliminaries}

We recall some basic definitions and facts on lattices, quantales,
generalized distances, coalgebra, monads and their algebras and on
indexed categories. We do assume basic familiarity with category
theory (e.g.~\cite{AHS90}).

\subsection{Lattices, Fixpoints and Galois Connections}

A \emph{complete lattice} $(\mathbb{L},\sqsubseteq)$ consists of a set
$\mathbb{L}$ with a partial order $\sqsubseteq$ such that each
$Y\subseteq \mathbb{L}$ has a least upper bound $\bigsqcup Y$ (also
called supremum, join) and a greatest lower bound $\bigsqcap Y$ (also
called infimum, meet).
The Knaster-Tarski theorem \cite{t:lattice-fixed-point} guarantees
that any monotone function $f\colon\mathbb{L}\to\mathbb{L}$ on a
complete lattice $\mathbb{L}$ has a \emph{least fixpoint} $\mu f$ and
a \emph{greatest fixpoint} $\nu f$.

Let $\mathbb{L}$, $\mathbb{B}$ be two lattices.  A \emph{Galois
  connection} from $\mathbb{L}$ to $\mathbb{B}$ is a pair
$\alpha\dashv \gamma$ of monotone functions
$\alpha\colon \mathbb{L}\to\mathbb{B}$,
$\gamma\colon \mathbb{B}\to\mathbb{L}$ such that for all $\ell\in L$,
$b\in\mathbb{B}$:
$\alpha(\ell)\sqsubseteq b \iff \ell\sqsubseteq \gamma(b)$.

A \emph{closure} $\cl\colon \mathbb{L}\to\mathbb{L}$ is a monotone,
idempotent and extensive (i.e.\
$\forall_{x\in\mathbb{L}}\ x \sqsubseteq \cl(x)$) function on a
lattice. A \emph{co-closure} is monotone, idempotent and extensive
wrt.\ $\sqsupseteq$. Given a Galois connection $\alpha\dashv\gamma$,
$\gamma\circ\alpha$ is always a closure and $\alpha\circ \gamma$ a
co-closure.

\subsection{Quantales and Generalized Distances}

\begin{definition}
  \label{def:quantale}
  A \emph{(unital, commutative) quantale} $(\V,\otimes,1)$, or
  just~$\V$, is a complete lattice with an associative, commutative
  operation $\otimes:\V\times\V\to\V$ with unit $1$ that distributes
  over arbitrary (possibly infinite) joins~$\bigvee$. If~$1$ is the
  top element of~$\V$, then~$\V$ is \emph{integral}.
\end{definition}

\noindent In a quantale~$\V$, the functor $-\otimes y$ has a right
adjoint $[y,-]$ for every $y\in\V$; that is,
$x\otimes y\le z \iff x\le [y,z]$ for all $x,y,z\in\V$.

\begin{example}
  \label{ex:quantale}
  \begin{enumerate}
  \item \label{ex:quantale-boolean}The Boolean algebra $\mathbf{2}$
    with $\otimes=\wedge$ and unit~$1$ is an integral quantale; for
    $y,z\in\mathbf{2}$, we have $[y,z]=y\to z$.
  \item \label{ex:quantale-reals}The complete lattice $[0,1]$ ordered
    by the reversed order of the reals, i.e.~${\le}={\ge_\mathbb{R}}$,
    and equipped with truncated addition $r\otimes s=\min(r+s,1)$, is
    an integral quantale; for $r,s\in[0,1]$, we have
    $[r,s]=s\dotminus r = \max(s-r, 0)$ (truncated subtraction).
\end{enumerate}
\end{example}

\begin{toappendix}
  \begin{lemmarep}
    \label{lem:internal-hom-unit}
    Let $\mathcal{V}$ be a quantale. Then for $z\in V$ it holds
    that $[1,z] = z$. If in addition the quantale is integral, we have
    $[z,1] = 1$
  \end{lemmarep}

  \begin{proof}
    As right adjoint of the associative operation the internal hom of
    a quantale is concretely defined as follows:
    \[ [y,z] = \bigvee \{ x'\in \V\mid x'\otimes y \le z \}. \]
    Hence we can simply use this characterization and obtain:
    \[ [1,z] = \bigvee \{ x'\in \V\mid x'\otimes 1 \le z \} = \bigvee
      \{ x'\in \V\mid x' \le z \} = z. \]
    If in addition $1$ is the largest element, we have:
    \[ [z,1] = \bigvee \{ x'\in \V\mid x'\otimes z \le 1 \} = \bigvee
      \V = 1. \qedhere \]
  \end{proof}
\end{toappendix}

\noindent Given a quantale $\V$, a \emph{directed ($\V$-valued) pseudometric}
(\emph{on $X$}) is a function $d\colon X\times X\to \V$ such that
(i)~$\forall_{x\in X}\ d(x,x) \ge 1$ (reflexivity);
(ii)~$\forall_{x,y,z\in X}\ d(x,z)\ge d(x,y)\otimes d(y,z)$
(transitivity/triangle inequality). Moreover,~$d$ is a
\emph{pseudometric} if additionally,
(iii)~$\forall_{x,y\in X}\ d(x,y) = d(y,x)$ (symmetry). We write
$\dpmet_\V(X)$ to denote the lattice of all directed pseudometrics on~$X$, while for pseudometrics we use $\pmet_\V(X)$.  Given
$d_X\in\dpmet_\V(X)$, $d_Y\in\dpmet_\V(Y)$, a function
$f\colon X\to Y$ is \emph{non-expansive} (wrt.\ $d_X,d_Y$) if
$d_X(x,x') \le d_Y(f(x),f(x'))$ for all $x,x'\in X$. Note that due to
the choice of order in the quantale, the inequality in the definitions
above is reversed wrt.\ the standard definitions that are typically given in
the order on the reals. As originally observed by
Lawvere~\cite{Law73}, one may see directed $\V$-valued pseudometrics
as $\V$-enriched categories, or just $\V$-categories, and
non-expansive functions as $\V$-functors.

\begin{toappendix}
  \begin{lemmarep}
    \label{lem:non-expansive-function-from-d}
    Let $d\in\dpmet_\V(X)$ and define
    $d_\V = [\_,\_]\colon \V\times\V\to\V$. Fix $y\in X$ and define a
    function $h\colon X\to\V$ as $h(x) = d(y,x)$. Then $h$ is
    non-expansive wrt.\ $d,d_\V$.

    The same holds if $d\in\pmet_\V(X)$ and $d_\V$ is defined as
    $d_\V(x,y) = [x,y]\land [y,x]$.

    Whenever $\V$ is integral, in both cases it holds for all $x\in X$
    that $d_\V(h(y),h(x)) = d(y,x)$
  \end{lemmarep}

  \begin{proof}
    We start with a directed metric $d\in\dpmet_\V(X)$ and
    $d_\V = [\_,\_]$ and our aim is to show that
    $d\preceq d_\V\circ (h\times h)$:
    \begin{align*}
      d\preceq d_\V\circ (h\times h) &\iff
      \forall_{x_1,x_2\in X}\  d(x_1,x_2) \le d_\V(h(x_1),h(x_2)) \\
      &\iff
      \forall_{x_1,x_2\in X}\  d(x_1,x_2) \le [h(x_1),h(x_2)] \\
      &\iff
      \forall_{x_1,x_2\in X}\  d(x_1,x_2)\otimes h(x_1) \le h(x_2) \\
      &\iff
      \forall_{x_1,x_2\in X}\  d(x_1,x_2)\otimes d(y,x_1) \le d(y,x_2) \\
      &\iff
      \forall_{x_1,x_2\in X}\  d(y,x_1)\otimes d(x_1,x_2) \le d(y,x_2)
    \end{align*}
    Here we use the fact that $x\otimes y\le z \iff x\le [y,z]$
    (adjunction) and the commutativity of $\otimes$. The last
    statement is true due to the triangle inequality and hence
    non-expansiveness of $h$ follows.

    Whenever $d_\tobject = [\_,\_]$, $d_\V(h(y),h(x))$ simplifies to
    $d_\V(d(y,y),d(y,x)) = [1,d(y,x)] = d(y,x)$, using the fact that
    $\V$ is integral (and hence $d(y,y) = 1$) and
    \cref{lem:internal-hom-unit}.

    We now take an undirected metric $d\in\pmet_\V(X)$ and define
    $d_\V(x,y) = [x,y]\land [y,x]$. Again our aim is to show that
    $d\preceq d_\V\circ (h\times h)$:
    \begin{align*}
      &d\preceq d_\V\circ (h\times h) \\
      &\iff
      \forall_{x_1,x_2\in X}\  d(x_1,x_2) \le d_\V(h(x_1),h(x_2)) \\
      &\iff
      \forall_{x_1,x_2\in X}\  d(x_1,x_2) \le [h(x_1),h(x_2)]\land[h(x_2),h(x_1)] \\
      &\iff
      \forall_{x_1,x_2\in X}\  d(x_1,x_2) \le [h(x_1),h(x_2)] \text{ and
        }d(x_1,x_2) \le [h(x_2),h(x_1)] \\
      &\iff
      \forall_{x_1,x_2\in X}\  d(x_1,x_2)\otimes h(x_1) \le h(x_2)
      \text{ and }d(x_1,x_2)\otimes h(x_2) \le h(x_1)\\
      &\iff
      \forall_{x_1,x_2\in X}\  d(x_1,x_2)\otimes d(y,x_1) \le d(y,x_2)
      \text{ and }d(x_1,x_2)\otimes d(y,x_2) \le d(y,x_1) \\
      &\iff
      \forall_{x_1,x_2\in X}\  d(y,x_1)\otimes d(x_1,x_2) \le d(y,x_2)
      \text{ and } d(y,x_2)\otimes d(x_2,x_1) \le d(y,x_1)
    \end{align*}
    Here we use again the adjunction and the commutativity of
    $\otimes$. In addition we need symmetry of $d$. The last
    statements are again true due to the triangle inequality and hence
    non-expansiveness of $h$ follows.

    In this case $d_\V(h(y),h(x))$ simplifies to the undirected case
    we have
    $d_\V(d(y,y),d(y,x)) = [1,d(y,x)] \land [d(x,y),1] = d(y,x) \land
    1 = d(y,x)$, based on \cref{lem:internal-hom-unit}, where we again
    use that the quantale is integral.
  \end{proof}
\end{toappendix}

\subsection{Coalgebras and Eilenberg-Moore Categories}
\label{subsec:coalgebras-EM}

Given a functor $F\colon\Ccat\to\Ccat$, an \emph{$F$-coalgebra}
$(X,c)$ (or simply $c$) consists of an object $X\in\Ccat$ and a
$\Ccat$-arrow $c\colon X\to FX$. In the paradigm of \emph{universal
  coalgebra}~\cite{Rutten00}, we understand~$X$ as the state space of
a transition system, $F$ as specifying the branching type of the
system, and~$c$ as a transition map that assigns to each state a
collection of successors structured according to~$F$. For instance,
when $\Ccat=\set$ is the category of sets and functions, then the
powerset functor $\power$ assigns to each set its powerset, and
$\power$-coalgebras are just sets equipped with a transition
relation. On the other hand, the (finitely supported) distribution
functor~$\dist$ assigns to each set~$X$ the set of finitely supported
probability distributions on~$X$, given in terms of maps
$p\colon X\to [0,1]$ with finite support such that
$\sum_{x\in X}p(x)=1$. A $\dist$-coalgebra thus is precisely a
Markov chain.

Recall that a \emph{monad}
$(T,\eta\colon \text{Id}\Rightarrow T, \mu\colon TT\Rightarrow T)$ on
$\Ccat$, usually denoted by just~$T$, consists of a functor
$T\colon\Ccat\to\Ccat$ and natural transformations
$\eta\colon \text{Id}\Rightarrow T$ (the \emph{unit}) and
$\mu\colon TT\Rightarrow T$ (the \emph{multiplication}), subject to
certain coherence laws. Monads abstractly capture notions of algebraic
theory, with~$TX$ being thought of as terms modulo provable equality
over variables in~$X$. Correspondingly, a \emph{$T$-algebra} $(X,a)$
consists of an object $X$ of $\Ccat$ and a $\Ccat$-arrow
$a\colon TX\to X$ such that $a\circ \eta_X = \mathit{id}_X$ and
$a\circ Ta = a\circ \mu_X$; we may think of $T$-algebras as algebras
for the algebraic theory encapsulated by~$T$. A homomorphism between
$T$-algebras $(X,a)$, $(Y,b)$ is a $\Ccat$-arrow $f\colon X\to Y$ such
that $b\circ Tf = f\circ a$. The \emph{Eilenberg-Moore category}
of~$T$, denoted $\alg{T}$, is the category of $T$-algebras and their
homomorphisms.  There is a free-forgetful adjunction
$L\dashv R\colon \Ccat \to \alg{T}$, where the forgetful functor $R$
maps an algebra to its underlying $\Ccat$-object and $L$ maps an
object $X\in\Ccat$ to the free algebra $(TX,\mu_X)$.

A \emph{(monad-over-functor) distributive law (or EM-law)} of a
monad~$T$ over a functor~$F$ is a natural transformation
$\zeta\colon TF\Rightarrow FT$ satisfying
$\zeta_X\circ \eta_{FX} = F\eta_X$ and
$\zeta_X\circ \mu_{FX} = F\mu_X \circ \zeta_{TX} \circ T\zeta_X$. This
is equivalent to saying that the assignment
$\tilde{F}(X,a) = (FX,Fa\circ \zeta_X)$ defines a \emph{lifting}
$\tilde F\colon \alg{T} \to \alg{T}$ of~$F$ (where \emph{lifting}
means that
$R\tilde F=FR$). Then, the \emph{determinization}
\cite{sbbr:powerset-coalgebraically} of a coalgebra $c\colon X\to FTX$
in $\Ccat$ is the transpose $c^\# \colon LX \to \tilde F LX$ of~$c$
under $L\dashv R$. More concretely the determinization can be obtained
as $c^\# = F\mu_X \circ \zeta_{TX}\circ Tc$. For instance, when
$FX= X^\Sigma\times \mathbf{2}$ and $T=\power$, then this yields
exactly the standard powerset construction for the determinization
of non-deterministic automata.

\subsection{Indexed Categories and Fibrations}

Our aim is to equip objects of a category with additional information,
e.g., consider sets with (equivalence) relations or metrics. Formally,
this is done by working with fibrations, in particular we will
consider fibrations arising from the Grothendieck construction for
indexed categories \cite{jacobs-fibrations,Jacobs_indexedcat}. For us
it is sufficient to consider as indexed categories functors
$\Phi\colon \Ccat^{\text{op}}\to \pos$, where $\pos$ is the category
of posets (ordered by $\preceq$) with monotone maps.  Such
functors induce a fibration $U\colon \elements{\Phi} \to \Ccat$ where
$U$ is the forgetful functor and $\elements{\Phi}$ is the category
whose objects and arrows are characterized as follows:
\[
  \infer={(X,d) \in \elements \Phi}{X \in \Ccat \ \land \ d\in\Phi X}\qquad
  \infer={(X,d) \xrightarrow f (Y,d') \in \elements\Phi}{X \xrightarrow f Y \in \Ccat \ \land \ d \preceq (\Phi f) d'}
\]
\noindent Here, $f^*=\Phi f$ is also called the \emph{reindexing operation},
and $d$ is called a \emph{conformance}.

Typical examples are functors $\Phi\colon\set^{\text{op}}\to \pos$
mapping a set $X$ to the lattice of equivalence relations or
pseudometrics on $X$.

\section{Adjoint Logic: the General Framework}
\label{sec:adjoint-logic}

\noindent We summarize previous results on relating logical and
behavioural conformances using Galois
connections~\cite{bgkm:hennessy-milner-galois}. These results are
based on the following well-known property that shows how fixpoints
are preserved by Galois connections
(e.g.~\cite{bkp:abstraction-up-to-games-fixpoint,cc:systematic-analysis,cc:temporal-abstract-interpretation}). The
formulation of these properties involves a notion of compatibility
studied for coinductive up-to techniques
\cite{p:complete-lattices-up-to}.

\begin{definition}
  Let $\lo,\cl\colon \mathbb{L}\to\mathbb{L}$ be monotone
  endofunctions on a partial order $(\mathbb{L},\sqsubseteq)$. Then
  $\lo$ is \emph{$\cl$-compatible} if
  $\lo\circ \cl \sqsubseteq \cl\circ \lo$.
\end{definition}

\begin{theorem}[\cite{bgkm:hennessy-milner-galois}]
  \label{thm:fixpoint-preservation}
  Let $\alpha\colon \mathbb{L} \to \mathbb{B}$,
  $\gamma\colon \mathbb{B} \to \mathbb{L}$ be a Galois connection
  between complete lattices $\mathbb{L}, \mathbb{B}$, and
  let\/ $\lo\colon \mathbb{L}\to\mathbb{L}$,
  $\be\colon \mathbb{B}\to\mathbb{B}$ be monotone. Then the following
  holds.
  \begin{enumerate}
  \item If $\be = \alpha\circ \lo\circ \gamma$ then
    $\alpha(\mu \lo) \sqsubseteq \mu \be$.
  \item If $\alpha\circ \lo = \be \circ \alpha$, then
    $\alpha(\mu\,\lo) = \mu\,\be$. If\/ $\lo$ reaches its fixpoint
    in~$\omega$ steps, i.e., $\mu \lo = \lo^\omega(\bot)$, then so
    does\/ $\be$.
  \item Let\/ $\cl = \gamma\circ \alpha$ be the closure operator of the
    Galois connection, and suppose that\/
    $\be = \alpha\circ \lo\circ \gamma$. If\/ $\lo$ is
    $\cl$-compatible, then $\alpha(\mu\,\lo) = \mu\,\be$.
  \end{enumerate}
\end{theorem}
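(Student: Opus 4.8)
The plan is to derive all three statements from the elementary properties of the Galois connection $\alpha\dashv\gamma$: the unit inequality $\mathit{id}\sqsubseteq\gamma\circ\alpha$ (which is precisely the closure $\cl$), the counit inequality $\alpha\circ\gamma\sqsubseteq\mathit{id}$, the triangle identity $\alpha\circ\gamma\circ\alpha=\alpha$, and the fact that the left adjoint $\alpha$ preserves all joins (in particular $\alpha(\bot)=\bot$). The recurring tool is the Knaster--Tarski characterisation of least fixpoints: to show $\mu\lo\sqsubseteq\ell$ it suffices to check that $\ell$ is a pre-fixpoint, i.e.\ $\lo(\ell)\sqsubseteq\ell$, and likewise on $\mathbb B$; we also use that $g\sqsubseteq h$ implies $\mu g\sqsubseteq\mu h$. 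Every inequality is moved between $\mathbb L$ and $\mathbb B$ through the adjunction bijection $\alpha(\ell)\sqsubseteq b\iff\ell\sqsubseteq\gamma(b)$.

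For item~(1) I would show that $\gamma(\mu\be)$ is a pre-fixpoint of $\lo$: applying $\alpha$ and using $\be=\alpha\circ\lo\circ\gamma$ gives $\alpha(\lo(\gamma(\mu\be)))=\be(\mu\be)=\mu\be$, so by adjunction $\lo(\gamma(\mu\be))\sqsubseteq\gamma(\mu\be)$; hence $\mu\lo\sqsubseteq\gamma(\mu\be)$, which transposes back to $\alpha(\mu\lo)\sqsubseteq\mu\be$. Item~(2) splits into two inequalities. For $\alpha(\mu\lo)\sqsubseteq\mu\be$, the hypothesis $\alpha\circ\lo=\be\circ\alpha$ and the counit yield $\alpha\circ\lo\circ\gamma=\be\circ\alpha\circ\gamma\sqsubseteq\be$, so item~(1) applied with $\be':=\alpha\circ\lo\circ\gamma$ gives $\alpha(\mu\lo)\sqsubseteq\mu\be'\sqsubseteq\mu\be$. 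For the reverse, $\alpha(\mu\lo)=\alpha(\lo(\mu\lo))=\be(\alpha(\mu\lo))$ shows $\alpha(\mu\lo)$ is a $\be$-fixpoint, whence $\mu\be\sqsubseteq\alpha(\mu\lo)$. If moreover $\mu\lo=\lo^\omega(\bot)$, a straightforward induction on $n$ using $\alpha\circ\lo=\be\circ\alpha$ and $\alpha(\bot)=\bot$ shows $\alpha(\lo^n(\bot))=\be^n(\bot)$; passing to the $\omega$-join, which $\alpha$ preserves, gives $\be^\omega(\bot)=\alpha(\lo^\omega(\bot))=\alpha(\mu\lo)=\mu\be$, so $\be$ stabilises at stage $\omega$ as well.

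For item~(3), the inequality $\alpha(\mu\lo)\sqsubseteq\mu\be$ is exactly item~(1), whose hypothesis $\be=\alpha\circ\lo\circ\gamma$ is assumed here. For the converse I would show that $\alpha(\mu\lo)$ is a pre-fixpoint of $\be$, and this is the unique point at which $\cl$-compatibility is used. Expanding, $\be(\alpha(\mu\lo))=\alpha(\lo(\gamma(\alpha(\mu\lo))))=\alpha(\lo(\cl(\mu\lo)))$; by $\cl$-compatibility together with $\lo(\mu\lo)=\mu\lo$ we obtain $\lo(\cl(\mu\lo))\sqsubseteq\cl(\lo(\mu\lo))=\cl(\mu\lo)=\gamma(\alpha(\mu\lo))$, and applying $\alpha$ together with the triangle identity $\alpha\circ\gamma\circ\alpha=\alpha$ yields $\be(\alpha(\mu\lo))\sqsubseteq\alpha(\mu\lo)$. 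Knaster--Tarski then gives $\mu\be\sqsubseteq\alpha(\mu\lo)$, and equality follows.

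I expect the only genuinely delicate point to be item~(3): it is tempting to transpose the target $\mu\be\sqsubseteq\alpha(\mu\lo)$ into a statement about $\mu\lo$, but the adjunction bijection runs the wrong way for that, so the argument must be carried out on the $\mathbb B$-side; the right move is to expand $\be$ so that the closure $\cl=\gamma\circ\alpha$ appears precisely at the argument of $\lo$, which is exactly where the compatibility inequality can be applied. All other steps are routine transpositions, together with the elementary induction needed for the $\omega$-step claim.
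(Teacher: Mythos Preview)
Your proof is correct in all three parts; the Galois-connection manipulations and Knaster--Tarski applications are carried out cleanly, and the use of $\cl$-compatibility in item~(3) is exactly right. Note, however, that the paper does not supply its own proof of this theorem: it is quoted verbatim from~\cite{bgkm:hennessy-milner-galois} and stated without proof, so there is no in-paper argument to compare against. Your argument is the standard one for such fixpoint-transfer results and would be an appropriate proof of the cited statement.
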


\begin{remark}
  \label{rem:weaker-compatibility}
  In fact, there is a weaker notion than compatibility that ensures the
  same result, i.e., $\alpha(\mu\,\lo) = \mu\,\be$. In particular, it
  is sufficient to show the following condition:
  \begin{itemize}
  \item $\lo(\cl(\ell)) \sqsubseteq \cl(\lo(\ell))$ for all
    $\ell\in \mathbb{S}\subseteq \mathbb{L}$ where $\mathbb{S}$ is an
    invariant of $\lo$, i.e.\
    $\lo[\mathbb{S}]\subseteq \mathbb{S}$,
    $\bot\in\mathbb{S}$, and $\mathbb{S}$ is closed under directed
    joins. (If the fixpoint is reached in $\omega$ steps, closure under
    directed joins is not required.)
  \end{itemize}
\end{remark}

\noindent We apply this in a scenario where~$\mathbb{L}$ consists of
logical formulas (or more precisely their semantics, in the shape of
sets of definable predicates) and~$\mathbb{B}$ consists of
conformances. These Galois connections will be contra-variant when we
consider the quantalic ordering in $\mathbb{B}$. Then, $\lo$ is the
``logic function'' that adds a layer of modalities and propositional
operators to a given set of predicates, so that $\mu\,\lo$ is the
semantics of the set of formulas of the logic. On the other hand,
$\be$ is the ``behaviour function'', whose greatest fixpoint $\nu\be$
(remember the contra-variance) is behavioural conformance.

\begin{example}
  \label{ex:Galois-connections}
  The simplest Galois connection used in
  \cite{bgkm:hennessy-milner-galois} for characterizing behavioural
  equivalence is between $\mathbb{L} = \power(\mathbf{2}^X)$ (sets of
  predicates on $X$) and $\mathbb{B} = \pmet_{\mathbf{2}}(X)$
  (equivalences on $X$), where $\alpha$ maps every set of predicates
  to the equivalence relation induced by it and $\gamma$ maps an
  equivalence to the set of predicates closed under it.

  Moving to pseudometrics we obtain a Galois connection between
  $\mathbb{L} = \power([0,1]^X)$ (sets of real-valued predicates on
  $X$) and $\mathbb{B} = \pmet_{[0,1]}(X)$ (pseudometrics on $X$)
  where $\alpha$ maps every set of functions $X\to[0,1]$ to the least
  pseudometric making all these functions non-expansive and $\gamma$
  takes a pseudometric and produces all its non-expansive functions.
\end{example}

So first, define a logical universe $\mathbb{L}$ and a logic function
$\lo\colon\mathbb{L}\to \mathbb{L}$. Second, choose a suitable Galois
connection $\alpha\dashv \gamma$ to a behaviour universe $\mathbb{B}$
and show that $\lo$ is $\cl$-compatible.
Third, derive the behaviour function
$\be = \alpha\circ\lo\circ\gamma\colon \mathbb{B}\to\mathbb{B}$.  From
the results above, we automatically obtain the equality
$\alpha(\mu\,\lo) = \mu \,\be$, which tells us that logical and
behavioural equivalence respectively distance coincide
(Hennessy-Milner theorem).

\begin{toappendix}
  Combining logic functions results in the combination of the
  corresponding behaviour functions, which is essential in
  establishing Hennessy-Milner theorems compositionally.

  \begin{proposition}[\cite{bgkm:hennessy-milner-galois}]
    \label{prop:compositionality}
    Let $\lo_1,\lo_2, \cl\colon \mathbb{L}\to\mathbb{L}$ be monotone
    functions on a complete lattice $\mathbb{L}$ such that $\lo_1$,
    $\lo_2$ are $\cl$-compatible. Then\/ $\lo_1\sqcup \lo_2$ and
    $\lo_1\circ\lo_2$ are also $\cl$-compatible.

    Further, let\/ $\be_i = \alpha\circ\lo_i\circ\gamma$ be the
    induced behaviour functions ($i=1,2$). Then the behaviour
    functions of\/ $\lo_1\sqcup \lo_2$ and\/ $\lo_1\circ\lo_2$ are\/
    $\be_1\sqcup\be_2$ and\/ $\be_1\circ\be_2$, respectively.  Every
    constant function $k$ and the identity are $c$-compatible. Their
    behaviour functions are the constant function
    $b\mapsto \alpha(\ell)$ (where $\ell$ is the constant value of
    $k$) respectively the co-closure $\alpha\circ\gamma$.
  \end{proposition}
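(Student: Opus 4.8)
The plan is to verify all three assertions by elementary order-theoretic computation, using only monotonicity, the fact that $\cl=\gamma\circ\alpha$ is a closure (extensive and idempotent) while $\alpha\circ\gamma$ is a co-closure (so $\alpha\circ\gamma\sqsubseteq\mathit{id}$), and the fact that $\alpha$, being a left adjoint, preserves all joins.

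First I would establish $\cl$-compatibility of the two combinators. For $\lo_1\sqcup\lo_2$: given $\ell\in\mathbb{L}$, expand $(\lo_1\sqcup\lo_2)(\cl(\ell))=\lo_1(\cl(\ell))\sqcup\lo_2(\cl(\ell))$, bound each summand by $\cl(\lo_i(\ell))$ using $\cl$-compatibility of $\lo_i$, and then use monotonicity of $\cl$ to conclude $\cl(\lo_1(\ell))\sqcup\cl(\lo_2(\ell))\sqsubseteq\cl(\lo_1(\ell)\sqcup\lo_2(\ell))=\cl((\lo_1\sqcup\lo_2)(\ell))$. For $\lo_1\circ\lo_2$: chain $\lo_1(\lo_2(\cl(\ell)))\sqsubseteq\lo_1(\cl(\lo_2(\ell)))\sqsubseteq\cl(\lo_1(\lo_2(\ell)))$, where the first step uses $\cl$-compatibility of $\lo_2$ and monotonicity of $\lo_1$, and the second uses $\cl$-compatibility of $\lo_1$. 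The remaining compatibility claims are immediate: $\mathit{id}\circ\cl=\cl\circ\mathit{id}$ trivially, and for a constant $k$ with value $\ell$, extensivity of $\cl$ gives $k(\cl(\ell'))=\ell\sqsubseteq\cl(\ell)=\cl(k(\ell'))$.

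Next I would compute the induced behaviour functions. For the join, the claim is immediate since $\alpha$ preserves joins: $\alpha((\lo_1\sqcup\lo_2)(\gamma(b)))=\alpha(\lo_1(\gamma(b)))\sqcup\alpha(\lo_2(\gamma(b)))=(\be_1\sqcup\be_2)(b)$. For the identity, $\alpha\circ\mathit{id}\circ\gamma=\alpha\circ\gamma$ is the co-closure; and for a constant $k\equiv\ell$ we have $\alpha(k(\gamma(b)))=\alpha(\ell)$ for every $b$, so its behaviour function is the asserted constant. The composition case is the one that requires real work: unfolding $\be_1(\be_2(b))$ gives $\alpha(\lo_1(\gamma(\alpha(\lo_2(\gamma(b))))))=\alpha(\lo_1(\cl(\lo_2(\gamma(b)))))$, and this must be shown equal to the behaviour function of $\lo_1\circ\lo_2$, namely $\alpha(\lo_1(\lo_2(\gamma(b))))$. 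One inequality follows from extensivity of $\cl$ and monotonicity of $\alpha\circ\lo_1$. For the converse, apply $\cl$-compatibility of $\lo_1$ to obtain $\lo_1(\cl(\lo_2(\gamma(b))))\sqsubseteq\cl(\lo_1(\lo_2(\gamma(b))))$, then apply $\alpha$ and use $\alpha\circ\cl=\alpha\circ\gamma\circ\alpha\sqsubseteq\alpha$ to collapse the right-hand side back to $\alpha(\lo_1(\lo_2(\gamma(b))))$.

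I expect this composition step for the behaviour function to be the only genuine obstacle: it is the single place where $\cl$-compatibility of $\lo_1$ (rather than mere monotonicity) is essential, and where one must exploit the asymmetry between the closure $\gamma\circ\alpha$ and the co-closure $\alpha\circ\gamma$. Everything else is a matter of unfolding definitions and using that left adjoints preserve joins.
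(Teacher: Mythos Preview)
Your argument is correct and is precisely the standard elementary verification one would expect. The paper itself does not supply a proof of this proposition; it merely records the result, citing \cite{bgkm:hennessy-milner-galois}, so there is no in-paper proof to compare against. One small sharpening: in the composition step you use $\alpha\circ\cl=\alpha\circ\gamma\circ\alpha\sqsubseteq\alpha$, but in fact $\alpha\circ\gamma\circ\alpha=\alpha$ holds as an equality in any Galois connection, which makes the collapse immediate rather than an inequality.
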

\end{toappendix}

\section{Adjoint Logic for Coalgebras}
\label{sec:adjointlogic}

In this section we will describe a general framework where the adjoint
logic is instantiated to the setting of coalgebraic modal logic. %

\subsection{Setting up the Adjunction}

One can generalize from \cref{ex:Galois-connections} and instead of a
set $X$ take an object in a (locally small) category $\Ccat$.
Furthermore we fix an object $\tobject\in\Ccat$ (the \emph{truth
  value} object), which in all our applications will be a
quantale. Predicates are represented by the indexed
category
$\Ccat(\_,\tobject)$; thus, sets of predicates (lattice $\mathbb{L}$)
are given by the indexed category $\power \circ \Ccat(\_,\tobject)$
(where the order is inclusion). In addition, we use an indexed
category~$\Phi$ specifying the notion of conformance on~$X$ (lattice
$\mathbb{B}$) and work with the following assumptions:
\begin{enumerate}[label=\textbf{A\arabic*}]
 \setcounter{enumi}{0}
 \item\label{assum:ConfExists}  %
 Each fibre $\Phi X$ is a poset having
arbitrary meets (thus, a complete lattice) and the reindexing map
preserves these meets (i.e.\ $\elements{\Phi}$ has fibred limits).
 \item\label{assum:TruthConfExists}  %
 Let $d_{\tobject}\in\Phi\tobject$ be a fixed conformance on the
truth value object $\tobject$.
\end{enumerate}

For an arrow $f \in \Ccat(X,Y)$, we write $f^\bullet$ for the
reindexing in $\Ccat(\_,\tobject)$ ($f^\bullet g = g\circ f$, where
$g\in \Ccat(Y,\tobject)$) and $f^*$ for the reindexing in $\Phi$
($f^* = \Phi f$).

\begin{theoremrep}
  \label{thm:fibredadjoint}
  Let $X$ be an object of $\mathbf{C}$. Under
  Assumptions~\ref{assum:ConfExists} and~\ref{assum:TruthConfExists},
  there is a dual adjoint situation (contravariant Galois connection)
  $\alpha_X \dashv \gamma_X$ between the underlying fibres:
  \begin{align*}
    \alpha_X\colon \power (\mathbf{C} (X,\tobject)) \to
    \Phi(X)^{\text{op}} &\quad S \subseteq
    \mathbf{C}(X,\tobject) \mapsto \bigwedge_{k\in S} k^*(d_\tobject)\\
    \gamma_X\colon \Phi(X)^{\text{op}} \to \power (\mathbf{C}
    (X,\tobject)) & \quad d\in\Phi X \mapsto
    \{k\in\mathbf{C}(X,\tobject)\mid d\preceq k^*(d_\tobject) \}.
  \end{align*}
  More concretely: $\alpha_X,\gamma_X$ are both antitone
  ($S\subseteq S'\implies \alpha_X(S)\succeq \alpha_X(S')$,
  $d\preceq d'\implies \gamma_X(d) \supseteq \gamma_X(d')$) and we
  have $d\preceq \alpha_X(S) \iff S\subseteq \gamma_X(d)$ for
  $d\in \Phi X$ and $S \in \power\Ccat(X,\tobject)$.
\end{theoremrep}

\begin{proof}
  First we argue that $\alpha_X,\gamma_X$ are order preserving (in
  this case antitone). For $\alpha_X$, let $S\subseteq S'$. Clearly,
  \[
  \alpha_X (S) = \bigwedge_{h\in S} h^* d_\tobject \succeq \bigwedge_{h\in S'} h^* d_{\tobject} = \alpha_X(S').
  \]
  For $\gamma_X$, let $d \preceq d'$ and let $h\in \gamma_X (d')$. I.e.,
  $d' \preceq h^* d_\tobject$; thus, $d \preceq h^* d_\tobject$. Hence, $h\in \gamma_X (d)$.

  Now it remains to show that $d \preceq \alpha_X (S) \iff S \subseteq
  \gamma_X (d)$, for every $d\in \Phi X$ and $S \in \power\Ccat(X,\tobject)$.
  \begin{align*}
    d\preceq \alpha_X (S) \iff&\ d \preceq \bigwedge_{h\in S} h^*d_\tobject
    \iff \ \forall_{h\in S}\ d \preceq h^*d_\tobject\\
    \iff&\ \forall_{h\in S}\ h\in \gamma_X(d)
    \iff \ S \subseteq \gamma_X(d). \qedhere
  \end{align*}
\end{proof}
\noindent Thus, for $X\in\Ccat$, the fibres $\power \Ccat(X,\tobject)$
and $(\Phi X)^{\text{op}}$ will take the role of $\mathbb B$ and
$\mathbb L$ (respectively) as in
\cite[Theorem~3.2]{bgkm:hennessy-milner-galois}. Moreover,
\cref{thm:fibredadjoint} will be instantiated to obtain the desired
Galois connections between predicates and conformances for our case
studies.

\begin{example}
  \label{ex:conformances}
  Let $\Ccat = \set$ and $\tobject=\V$ be a quantale. We consider
  $\Phi X = \dpmet_\V(X)$ (resp., $\Phi X = \pmet_\V(X)$) with the
  order $\preceq$ on $\Phi X$ induced by the pointwise lifting of the
  order $\leq$ on~$\V$. The reindexing functor $f^*$ for a function
  $f\colon X \to Y$ is given by $f^*d =d\circ (f\times f)$; thus,
  satisfying \ref{assum:ConfExists}. As conformance $d_\tobject$ on
  $\V$ we take the internal hom $[\_,\_]$ (resp., its symmetrization:
  $d_\tobject(x,y) = [x,y]\land [y,x]$); thus, satisfying
  \ref{assum:TruthConfExists}. Then we have
  $\alpha_X\dashv\gamma_X \colon \power(\set(X,\V)) \rightleftarrows
  \dpmet_\V(X)$, where:
  \begin{align*}
    \alpha_X(S)(x,x') =\ &  \bigwedge_{h\in S} d_\tobject(h(x),h(x')) \\
    \gamma_X(d)=\ & \left\{h\colon X\to \V \mid \forall_{x,x'\in X}\
      d(x,x') \leq d_\tobject(h(x),h(x')) \right\}
  \end{align*}

  In both cases, $\alpha_X$ assigns to a set of maps the
  greatest (directed) pseudometric making
  all these maps non-expansive, while $\gamma_X$ maps a pseudometric
  all its non-expansive maps.
\end{example}

\subsection{Characterizing Closure}

Given that the key condition imposed on the logic function in
\cref{thm:fixpoint-preservation} is compatibility with the closure of
the Galois connection, it is important to understand how this closure
operates. In the setting of \cref{thm:fibredadjoint} we can
characterize the closure in terms of non-expansive propositional
operators, provided that~$\gamma$ is natural. We note first
that~$\alpha$ is always natural:

\begin{proposition}
  In \cref{thm:fibredadjoint}, the transformation %
  $\alpha$ is natural in $X\in\mathbf C$, that is, for
  $f\in \mathbf{C}(X,Y)$, we have
  $\alpha_X \circ \power (f^\bullet) = f^*\circ \alpha_Y$.
\end{proposition}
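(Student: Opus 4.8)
The plan is to unfold both composites on an arbitrary set of predicates $S \subseteq \Ccat(Y,\tobject)$ and check that they land on the same conformance in $\Phi X$. On the left-hand side, $\power(f^\bullet)$ sends $S$ to $\{k \circ f \mid k \in S\}$, so $\alpha_X(\power(f^\bullet)(S)) = \bigwedge_{k\in S} (k\circ f)^*(d_\tobject)$. On the right-hand side, $f^*(\alpha_Y(S)) = f^*\big(\bigwedge_{k\in S} k^*(d_\tobject)\big)$. So the statement reduces to two facts: functoriality of reindexing, $(k\circ f)^* = f^* \circ k^*$, which holds because $\Phi$ is a functor $\Ccat^{\mathrm{op}}\to\pos$; and the fact that $f^*$ preserves the meet $\bigwedge_{k\in S}$, which is exactly \ref{assum:ConfExists}. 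Combining these, $\bigwedge_{k\in S}(k\circ f)^*(d_\tobject) = \bigwedge_{k\in S} f^*(k^*(d_\tobject)) = f^*\big(\bigwedge_{k\in S} k^*(d_\tobject)\big)$, which is the desired equality.

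Concretely, I would write: fix $f\in\Ccat(X,Y)$ and $S\subseteq\Ccat(Y,\tobject)$, then compute
\begin{align*}
  \alpha_X(\power(f^\bullet)(S))
  &= \alpha_X\big(\{k\circ f \mid k\in S\}\big)
  = \bigwedge_{k\in S} (k\circ f)^*(d_\tobject)\\
  &= \bigwedge_{k\in S} f^*\big(k^*(d_\tobject)\big)
  = f^*\Big(\bigwedge_{k\in S} k^*(d_\tobject)\Big)
  = f^*(\alpha_Y(S)),
\end{align*}
citing functoriality of $\Phi$ for the third equality and \ref{assum:ConfExists} (meet-preservation of reindexing) for the fourth. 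One should be mildly careful that $\power(f^\bullet)$ is genuinely image-taking, i.e.\ $\power(f^\bullet)(S) = \{f^\bullet k \mid k\in S\} = \{k\circ f\mid k\in S\}$ by the definition $f^\bullet g = g\circ f$; this is just the standard action of the covariant powerset functor on the reindexing map, but it is worth spelling out since the indexing on the predicate side is covariant in a different sense than $\Phi$.

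There is essentially no obstacle here — the proposition is a routine consequence of the functoriality of $\Phi$ together with \ref{assum:ConfExists}. The only point requiring the slightest attention is bookkeeping around variance: $\alpha_X$ is declared as a map into $\Phi(X)^{\mathrm{op}}$, so "natural" here means natural as stated at the level of the underlying posets / the equation $\alpha_X\circ\power(f^\bullet) = f^*\circ\alpha_Y$, and one should make sure the direction of the reindexing arrows matches (both $f^\bullet$ on predicates and $f^*$ on conformances go "backwards" along $f$, so the square is well-typed). No infinitary subtlety arises because $\bigwedge$ over arbitrary $S$ exists by \ref{assum:ConfExists} and is preserved by reindexing by the same assumption.
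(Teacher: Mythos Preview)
Your argument is correct and is precisely the expected one. The paper does not actually spell out a proof of this proposition, treating it as routine; your two ingredients --- contravariant functoriality of reindexing $(k\circ f)^* = f^*\circ k^*$ and preservation of meets by $f^*$ from \ref{assum:ConfExists} --- are exactly the facts the paper itself invokes when it needs similar identities elsewhere (e.g.\ in the proof of \cref{lem:charClosure}).
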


\noindent For the right adjoint $\gamma$, naturality need not hold in
general. It does hold for $\set$ and generalized (directed) metrics
over the quantales in \cref{ex:quantale}. A counterexample can
however be constructed for $\Ccat = \alg{\power}$ and $\V=[0,1]$
(see \cite{bgkm:hennessy-milner-galois}).

If $\gamma$ is natural,
then we can characterize the closure $\gamma\circ \alpha$ using the
internal language of indexed categories. To this end, suppose that
$\mathbf C$ has (small) products. Then for every
$S\subseteq \mathbf C(X,\tobject)$. we have a unique tupling
$\langle S \rangle \colon X \to \tobject^S$ such that
$\pi_k \circ \langle S \rangle=k$ for all $k\in S$, where
$\pi_k\colon \tobject^S \to \tobject$ is the product projection
for~$k$.

\begin{lemmarep}
  \label{lem:charClosure}
  The right adjoint $\gamma$ in \cref{thm:fibredadjoint} is laxly
  natural, i.e.\
  $\power (f^\bullet) \circ \gamma_Y \subseteq \gamma_X \circ f^*$ for
  $f\colon X \to Y \in \mathbf C$.
  If $\gamma$ is natural (i.e.\ the inclusion is an equality) and
  $\mathbf C$ has products, then
  $\gamma_X(\alpha_X(S)) = \power(\langle S \rangle^\bullet)
  (\gamma_{\tobject^S} (d_{\tobject^S}))$, where
  $d_{\tobject^S} = \bigwedge_{k\in S} \pi_k^* d_\tobject$.
\end{lemmarep}

\begin{proof}
  In order to show $\power (f^\bullet) \circ \gamma_Y \subseteq
  \gamma_X \circ f^*$ we fix $d\in\Psi X$ and $h\in \power
  (f^\bullet)(\gamma_Y(d))$. This implies that $h = k\circ f$ where
  $k\in \gamma_Y(d))$, hence $d\preceq k^*d_\tobject$. From this we can
  infer $h^* d_\tobject = (k\circ f)^*d_\tobject = f^*(k^*d_{\tobject})
  \succeq f^*d$. Hence $h\in\gamma_X(f^*d)$ follows.

  Now suppose that $\gamma$ is natural. We observe that
  $\langle S \rangle^* d_\tobject^S = \alpha_X S$ for every
  $S\subseteq \mathbf C(X,\tobject)$.
  \begin{align*}
    \langle S \rangle^* d_{\tobject^S} =&\ \langle S \rangle^* \bigwedge_{k\in S} \pi_k^* d_\tobject && \because \text{$\Phi$ has indexed limits}\\
    =&\  \bigwedge_{k\in S} \langle S \rangle^* \pi_k^* d_\tobject && \because \text{reindexing $(g\circ f)^* = f^* \circ g^*$}\\
    =&\ \bigwedge_{k\in S} (\pi_k \circ \langle S \rangle)^*  d_\tobject && \because \text{universal property of products}\\
    =&\ \bigwedge_{k\in S} k^* d_\tobject = \alpha_X S.
  \end{align*}
  Now the result follows directly from the naturality of $\gamma$, i.e.,
  \[
    \power(\langle S \rangle^\bullet) \gamma_{\tobject^S}
    (d_{\tobject^S}) = \gamma_X \langle S \rangle^*
    (d_{\tobject^S}) = \gamma_X \alpha_X S. \qedhere
  \]
\end{proof}

\noindent This result can be interpreted as follows:
$\gamma_{\tobject^S} (d_{\tobject^S})$ is the set of all non-expansive
functions $\tobject^S\to \tobject$, hence all non-expansive operators
of arbitrary arity on $\tobject$. Reindexing via $\langle S \rangle$
means to combine all predicates in $S$ via those operators, hence we
describe the closure under all non-expansive operations on $\tobject$.

\subsection{Towards a Generic Logic Function}
\label{subsec:gen-logic}

Since our slogan is to generate the behaviour function from the logic
function, we start by setting up our logical framework first. %
Following
\cite{bgkm:hennessy-milner-galois,Forster_et_al:CSL.2023:Density}, we
adopt a semantic approach to defining a (modal) logic, i.e., we
specify the operators (including modalities) as a transformation of
predicates; formally, as a (natural) transformation $\lo$ of type
$\power\Ccat(\_,\tobject) \Rightarrow
\power\Ccat(\_,\tobject)$. The idea is that the logic function
$\lo_X$ adds one ``layer'' of modal depth; in particular, the least
fixpoint $\mu \lo_X$ of $\lo_X$ can be seen as the set of
(interpretations of) all modal formulas.

In particular, we require
\begin{enumerate}[label=\textbf{A\arabic*}]
 \setcounter{enumi}{2}
 \item\label{assum:EvalsExist} Fix a family
$(\ev_{\lambda}\in\Ccat)_{\lambda\in\Lambda}$ of evaluation maps
$\ev_{\lambda}\colon F\tobject\to \tobject$.
\end{enumerate}
As noted in \cite{Schroeder2007:ExpressivityCoalgModalLog}, such evaluation maps -- commonly used in
coalgebraic modal logic -- correspond to
natural transformations of type $\Ccat(\_,\tobject)\to\Ccat(F\_,\tobject)$ by the Yoneda lemma.

\begin{propositionrep}
  A family of evaluation maps
  $(\ev_{\lambda}%
  )_{\lambda\in\Lambda}$ induces a natural transformation %
  $\Lambda \colon \power\Ccat(\_,\tobject) \Rightarrow
  \power\Ccat(F\_,\tobject)$ given by
  $S \mapsto \{\lambda_X(h)\mid \lambda\in\Lambda, h\in S\}$, where
  $\lambda_X(h)=\ev_\lambda \circ Fh$.
\end{propositionrep}

\begin{proof}
  Naturality of $\Lambda$ amounts to showing that, given $f\colon X\to
  Y$ in $\Ccat$ and $S'\subseteq \Ccat(X,\tobject)$:
  \begin{align*}
    &\power (Ff)^\bullet (\Lambda_Y S') = \power (Ff)^\bullet
    \{\ev_\lambda \circ Fh \mid h\in S'\} = \{\ev_\lambda \circ Fh \circ
    Ff \mid h\in S'\} \\
    = &\{\ev_\lambda \circ F(h \circ f) \mid h\in S'\} =
    \{\ev_\lambda \circ Fk \mid k\in \power f^\bullet(S')\} =
    \Lambda_X(\power f^\bullet(S')) \qedhere
  \end{align*}
\end{proof}

Apart from modalities, a logic typically needs operators and
constants. We do not consider constants as $0$-ary operators, which
allows us to distinguish between operators that arise as in
\cref{lem:charClosure} from the closure of the Galois connection (that
is, non-expansive operators) and the remaining (constant) operators
that bring additional distinguishing power. This is, for instance,
needed in the case of trace equivalence on a determinized transition
system to distinguish the empty set of states from sets of
states having no transitions. We need an additional (constant)
predicate for this task that can neither be provided by the closure nor
by a constant modality (see \cref{subsec:trace-equivalence}).

\begin{enumerate}[label=\textbf{A\arabic*}]
 \setcounter{enumi}{3}
\item\label{assum:ConstantsExist} We assume a set
  $\Theta_X \subseteq \Ccat(X,\tobject)$ of \emph{constants} (which is
  later restricted to consist of free extensions of constant
  maps).
\end{enumerate}

To model the propositional operators, we introduce a closure $\cl'_X$:
\begin{enumerate}[label=\textbf{A\arabic*}]
 \setcounter{enumi}{4}
\item\label{assum:PropositionalClosure} For each $X\in\Ccat$ we assume
  that there is a closure
  $\cl'_X \colon \power \Ccat (X,\tobject) \to \power
  \Ccat(X,\tobject)$ (not necessarily natural), specifying the
  propositional operators.
\end{enumerate}

We say
that $\cl'_X$ is a \emph{subclosure} of $\cl_X$ whenever
$\cl'_X\subseteq \cl_X$,
which means that the propositional operators implemented by $\cl'_X$ are
already contained in the closure induced by the Galois connection
(cf.~\cref{lem:charClosure}).

\begin{toappendix}
  \begin{lemmarep}
    \label{lem:char-subclosure}
    Let $\cl_X = \gamma_X\circ\alpha_X$ be the closure obtained from
    the Galois connection in \cref{thm:fibredadjoint} and let
    $\cl'_X\colon \power \Ccat (X,\tobject) \to \power
    \Ccat(X,\tobject)$ be some closure. Then $\cl'_X\subseteq \cl_X$
    is equivalent to $\cl_X = \cl'_X\circ\cl_X$ and equivalent to
    $\cl_X = \cl_X\circ\cl'_X$.
  \end{lemmarep}

  \begin{proof}
    Assume that $\cl'_X\subseteq \cl_X$. We obtain
    $\cl_X\subseteq \cl'_X\circ\cl_X$,
    $\cl_X\subseteq \cl_X\circ\cl'_X$ trivially, since $\cl'_X$ is a
    closure and satisfies
    $\mathit{id}_{\Ccat (X,\tobject)}\subseteq \cl'_X$. Furthermore
    \[ \cl'_X\circ\cl_X \subseteq \cl_X\circ\cl_X = \cl_X, \]
    similarly for the other equality ($\cl_X = \cl_X\circ\cl'_X$).

    Now assume that $\cl'_X\circ\cl_X = \cl_X$. Then, again by using
    the fact that $\cl_X$ is a closure:
    \[ \cl'_X \subseteq \cl'_X\circ\cl_X = \cl_X. \]

    If on the other hand $\cl_X = \cl_X\circ\cl'_X$ we can argue
    similarly by observing
    \[ \cl'_X \subseteq \cl_X\circ\cl'_X = \cl_X. \qedhere \]
  \end{proof}
\end{toappendix}

Now we can define the logic function for a coalgebra $c$ as
\[
  \lo_X = \power(c^\bullet)\circ \Lambda_X\circ \cl'_X \cup\
  \Theta_X\colon \power \Ccat(X,\tobject)\to \power\Ccat(X,\tobject).
\]
Its least fixpoint contains all predicates that can be described by
modal formulas.\footnote{In our setup a
  formula is either a constant or starts with a modality, which still
  results in an expressive logic. One could slightly modify $\lo_X$
  and obtain all formulas by adding another closure $\cl'$.}

\begin{example}
  \label{ex:evaluation-maps-logic}
  Let $\Ccat=\set$, $c\colon X\to FX$ and $\Phi X= \pmet_\V(X)$.
  Recall the Galois connection from \cref{ex:conformances} and
  consider the following two examples, where in both cases no
  constants are needed, i.e., $\Theta_X$ is empty; thus,
  \ref{assum:ConstantsExist} vacuously holds.
  \begin{enumerate}[wide]
  \item \emph{Bisimilarity on (unlabelled) transition systems:} we let
    $F=\power_\mathit{fin}$ (finite powerset functor), $\V=2$, and
    consider the evaluation map
    $\ev_\Diamond\colon\power_\mathit{fin} 2 \to 2$ encoding the usual
    diamond modality: $\ev_\Diamond(U)=1 \iff 1\in U$. This can be
    extended to a logic by choosing as $\cl'$
    (Assumption~\ref{assum:PropositionalClosure}) the closure under
    all (finitary) Boolean operators.
  \item\label{item:prob} \emph{Behavioural metrics for probabilistic
      transition systems with termination:} we let $FX=\dist X+1$
    (where $1=\{\checkmark\}$) and
    $\V=(\left[0,1\right],\geq_\mathbb{R})$. Define two evaluation
    maps: $\ev_E\colon D\left[0,1\right]+1 \to \left[0,1\right]$
    corresponds to expectation, i.e.\
    $\ev_E(p)=\sum_{r\in \left[0,1\right]} r\cdot p(r)$ if
    $p\in \dist X$ ($0$ otherwise). Furthermore,
    $\ev_*\colon \dist [0,1]+1\to [0,1]$ with $\ev_*(p)=1$ if
    $p=\checkmark$ ($0$ otherwise). We extend this to a logic by
    defining $\cl'$ (Assumption~\ref{assum:PropositionalClosure}): we add as
    operators the constant~$1$, $\min(\varphi,\varphi')$, $1-\varphi$
    and $\varphi\dotminus q$ for a rational $q$ (where $\varphi$ is a
    formula), as in similar logics for probabilistic transition
    systems~\cite{bw:behavioural-pseudometric}.
  \end{enumerate}
\end{example}

To ensure that the requirements of \cref{sec:adjoint-logic} are met,
we have to show compatibility of the logic function. To this end, we
introduce the notion of compability of $\cl'$.

\begin{definition}
  Given a closure
  $\cl'_X \colon \power \Ccat (X,\tobject) \to \power
  \Ccat(X,\tobject)$, we say that $\cl'_X$ is \emph{compatible} %
  if the map $\Lambda_X \circ\ \cl'_X$ (for each $X\in\Ccat$) is
  compatible with the closure $\cl_X=\gamma_X \circ \alpha_X$ induced
  by the adjoint situation in \cref{thm:fibredadjoint}, i.e.,
  $
    \Lambda_X \circ \cl_X' \circ \cl_X \subseteq \cl_{FX} \circ
    \Lambda_X \circ \cl_X'.
  $
\end{definition}
The following results hold under
Assumptions~\ref{assum:ConfExists}-\ref{assum:PropositionalClosure}
and thus, we avoid stating them in various lemma/theorem statements.
\begin{propositionrep}
  \label{prop:compatibility}
  For a given compatible closure $\cl'_X$, the above logic function
  $\lo_X$ %
  is $\cl_X$-compatible, i.e.,
  $\lo_X \circ \cl_X\subseteq \cl_X \circ \lo_X$. %
\end{propositionrep}

\begin{proof}
  We use \cref{prop:compositionality} to prove compatibility of $\lo$
  in a compositional manner. Let
  $\lo_1=\power (c^\bullet) \circ \Lambda_X \circ \cl'_X$ and
  $\lo_2=\Theta_X$, i.e., $\lo_X=\lo_1 \cup \lo_2$.

  \noindent
  First, we derive
  \begin{align*}
    \lo_1 \circ \cl_X =&\ \power (c^\bullet) \circ \Lambda_X \circ \cl'_X \circ \cl_X && (\because\ \text{$\cl_X'$ is compatible})\\
    \subseteq &\ \power (c^\bullet) \circ \cl_{FX} \circ \Lambda_X \circ \cl_X' \\
    =&\ \power (c^\bullet) \circ \gamma_{FX} \circ \alpha_{FX} \circ \Lambda_X \circ \cl_X'  && (\because\ \text{lax naturality of $\gamma$})\\
    \subseteq &\ \gamma_{X} \circ c^* \circ \alpha_{FX} \circ \Lambda_X \circ \cl_X' && (\because\ \text{naturality of $\alpha$})\\
    =&\ \gamma_X \circ \alpha_X \circ \power(c^\bullet)\circ \Lambda_X \circ \cl'_X \\
    =&\ \cl_X \circ \lo_1 .
  \end{align*}
  \noindent
  Note that $\lo_2$ is a constant function (thus $\cl$-compatible) and
  thus \cref{prop:compositionality} is applicable.
\end{proof}

We will now study equivalent conditions and special cases in which
compatibility holds. First, it is easy to see that $\cl' = \cl$ is
always compatible, but typically introduces infinitary
operators. Moreover, if~$\cl'$ is the identity (that is, there are no
propositional operators), then compatibility of $\cl'$ reduces to
$\cl$-compatibility of $\Lambda$.

\begin{lemmarep}
  \label{lem:compatibility-cond}
  Let $\cl_X'$ be a subclosure of $\cl_X$. It holds that $\cl_X'$ is
  compatible if and only if
  $ \alpha_{FX} \circ \Lambda_X \circ \cl_X' \preceq \alpha_{FX}
    \circ \Lambda_X \circ \cl_X. $
\end{lemmarep}

\begin{proof}
  Since we assume that $\cl'_X$ is a subclosure of $\cl_X$,
  compatibility of $\cl'$ simplifies to
  \[ \Lambda\circ\cl_X \subseteq \cl_{FX}\circ\Lambda_X\circ\cl'_X =
    \gamma_{FX}\circ\alpha_{FX}\circ\Lambda_X\circ\cl'_X. \]
  Due to the properties of a Galois connection this is equivalent to
  \[
    \alpha_{FX}\circ\Lambda_X\circ\cl'_X\preceq \alpha_{FX}\circ
    \Lambda_X\circ\cl_X. \qedhere \]
\end{proof}

We next adapt the separation property establishing expressiveness of
\emph{graded logics} w.r.t.\ \emph{graded
  semantics}~\cite{DorschEA19,fswbgkm:graded-logics-em} to the present
setting, an additional twist being that the conformance w.r.t.\ which
modalities must be separating is the one induced by the modalities
themselves.
\begin{definition}[Depth-1 self-separation]\label{def:self-separation}
  A set $S\subseteq \Ccat(X,\tobject)$ of predicates is \emph{initial}
  for $d\in\Phi X$ if $\alpha_X(S) = d$. Let $\cl_X'$ be a subclosure
  of $\cl_X$. The \emph{depth-1 self-separation} property holds if for
  every $S$ that is closed under $\cl'_X$ (i.e., $S = \cl'_X(S)$) and
  initial for $d$, it follows that $\Lambda_X(S)$ is initial for
  $\kant_X(d)$ %
  where $\kant_X = \alpha_{FX}\circ \Lambda_X\circ \gamma_X$.
\end{definition}

\begin{lemmarep}
  \label{lem:compatibility-depth1}
  Let $\cl_X'$ be a subclosure of $\cl_X$. Then $\cl_X'$ is
  compatible if and only if the depth-1 self-separation property holds.
\end{lemmarep}

\begin{proof}
  We first show that the depth-1 self-separation property implies
  compatibility of $\cl'_X$, by proving the inequality of
  \cref{lem:compatibility-cond}. So let
  $T\subseteq \Ccat(X,\tobject)$. Define $S = \cl'_X(T)$ and observe
  that $S$ is closed under $\cl'$. Put $d = \alpha_X(S)$; by
  definition,~$S$ is initial for $d$. By the depth-1 self-separation
  property, it follows that $\Lambda_X(S)$ is initial for $\kant_X(d)
  = \alpha_{FX}(\Lambda_X(\gamma_X(d)))$. Hence
  \begin{align*}
    \alpha_{FX}(\Lambda_X(\cl'_X(T))) &= \alpha_{FX}(\Lambda_X(S)) =
    \alpha_{FX}(\Lambda_X(\gamma_X(d))) =
    \alpha_{FX}(\Lambda_X(\gamma_X(\alpha_X(S)))) \\
    &=
    \alpha_{FX}(\Lambda_X(\cl_X(\cl'_X(T)))) =
    \alpha_{FX}(\Lambda_X(\cl_X(T))),
  \end{align*}
  using the fact that $\cl'$ is a subclosure of $\cl$. From
  \cref{lem:compatibility-cond} we infer compatibility of
  $\cl'_X$.

  We now show the other direction, i.e., compatibility of $\cl'$
  implies depth-1 self-separation.  Let $S\subseteq \Ccat(X,\tobject)$ be a
  set that is closed under $\cl'_X$ and $S$ is initial for $d$, i.e.,
  $\alpha_X(d) = S$.  We have to show that $\Lambda_X(S)$ is initial
  for $\kant_X(d)$, i.e., that
  $\alpha_{FX}(\Lambda_X(S)) =
  \alpha_{FX}(\Lambda_X(\gamma_X(d)))$. The inequality
  $\alpha_{FX}(\Lambda_X(S)) \succeq
  \alpha_{FX}(\Lambda_X(\gamma_X(d)))$ follows trivially from the
  properties of the (contra-variant) Galois connection, since
  $S\subseteq \gamma_X(\alpha_X(S)) = \gamma_X(d)$.  For the other
  direction we use \cref{lem:compatibility-cond}, which gives us:
  \begin{align*}
    \alpha_{FX}(\Lambda_X(S)) &= \alpha_{FX}(\Lambda_X(\cl'_X(S)))
    \preceq \alpha_{FX}(\Lambda_X(\cl_X(S)))  \\
    &= \alpha_{FX}(\Lambda_X(\gamma_X(\alpha_X(S)))) =
    \alpha_{FX}(\Lambda_X(\gamma_X(d))).\qedhere
  \end{align*}
\end{proof}

Finally, we study a sufficient condition on evaluation maps
ensuring $\cl$-compatibility of $\Lambda$.

\begin{lemmarep}
  \label{lem:LambdaCompatibleForNaturalEvs}
  If each evaluation map $\ev_\lambda$ arises as a natural
  transformation $\eta\colon F \Rightarrow \text{Id}$ or
  $\eta\colon F \Rightarrow \mathbf\tobject$ ($\mathbf\tobject$ is the
  constant functor mapping every object to $\tobject$), that is
  $\ev_\lambda = \eta_\tobject$, then $\Lambda$ is compatible with
  $\cl$.
\end{lemmarep}

\begin{proof}
  We denote the components of the natural transformation by
  $\ev_{\lambda,X}$, hence the evaluation map itself correspondgs to
  $\ev_{\lambda,\tobject}$.

  Let $X\in\Ccat$ and $S\subseteq \Ccat(X,\tobject)$. We need to show that $\Lambda_X\cl_X(S) \subseteq \cl_{FX} \circ \Lambda_X (S)$.

  So assume $\ev_\lambda \circ Fk \in \Lambda_X \circ \cl_X (S)$ for some $\lambda\in\Lambda$ and $k\in \cl_X (S)$ (i.e., $\alpha_X (S) \preceq k^*(d_\tobject)$). Now we do the case distinction based on the type of $\ev_\lambda$.

  \begin{itemize}
    \item Let $\ev_\lambda \colon F \Rightarrow \mathbf{\tobject}$. Then we find
        \begin{align*}
           \alpha_{FX}(\Lambda_X(S)) =&\ \bigwedge_{\lambda'\in \Lambda,h\in S} (\ev_{\lambda',\tobject} \circ Fh)^* (d_\tobject) && (\because\ \ev_{\lambda'}\ \text{is natural})\\
           =&\ \bigwedge_{\lambda'\in \Lambda,h\in S} \ev_{\lambda',X}^* (d_\tobject)\\
           =&\ \bigwedge_{\lambda'\in \Lambda} \ev_{\lambda',X}^* (d_\tobject)\\
           \preceq &\ \ev_{\lambda,X}^* (d_\tobject)
        \end{align*}
        Thus, $\ev_{\lambda,X} = \ev_{\lambda,\tobject} \circ Fk \in \cl_{FX}(\Lambda_X(S))$.
    \item Let $\ev_\lambda \colon F \Rightarrow \text{Id}$. Using $k\in\cl_X(S)$ we have $\alpha_X(S) \preceq k^* (d_\tobject)$ and since reindexing functor are order preserving, so we find that
        \begin{align*}
        \ev_{\lambda,X}^* & (\alpha_X(S)) \preceq  \ev_{\lambda,X}^* k^* (d_\tobject)\\
          \iff &\ \ev_{\lambda,X}^* \left(\bigwedge_{h\in S} h^*(d_\tobject) \right)\preceq  \ev_{\lambda,X}^* k^* (d_\tobject)\\
          \iff & \bigwedge_{h\in S} \left(\ev_{\lambda,X}^* h^*(d_\tobject) \right) \preceq \ev_{\lambda,X}^* k^* (d_\tobject) \\
          \iff & \bigwedge_{h\in S} (h\circ \ev_{\lambda,X})^* (d_\tobject) \preceq (k\circ \ev_{\lambda,X})^* (d_\tobject)\\
          \iff & \bigwedge_{h\in S} (\ev_{\lambda,\tobject}\circ Fh)^*(d_\tobject) \preceq (\ev_{\lambda,\tobject}\circ Fk)^*(d_\tobject)
        \end{align*}
        Moreover $\alpha_{FX}\Lambda_X(S) =  \bigwedge_{\lambda'\in \Lambda,h\in S} (\ev_{\lambda',\tobject} \circ Fh)^* (d_\tobject) \preceq  \bigwedge_{h\in S} (\ev_{\lambda,\tobject} \circ Fh)^* (d_\tobject)$ and thus using the above (derived) inequality we find
        \[
        \alpha_{FX}\Lambda_X(S) \preceq \bigwedge_{h\in S} (\ev_{\lambda,\tobject} \circ Fh)^* (d_\tobject) \preceq (\ev_{\lambda,\tobject}\circ Fk)^*(d_\tobject)
        \]
        Thus,
        $\ev_{\lambda,\tobject}\circ Fk \in
        \cl_{FX}(\Lambda_X(S))$. \qedhere
      \end{itemize}
\end{proof}

\begin{example}
  \label{ex:compatibility}
  We establish compatibility for the logics considered in
  \cref{ex:evaluation-maps-logic}. In branching-time logics in
  general, depth-1 self-separation usually boils down to establishing
  a Stone-Weierstraß type property saying that if
  $S\subseteq \Ccat(X,\tobject)$ is initial and closed under $\cl'_X$,
  then~$S$ is dense in $\Ccat(X,\tobject)$, for suitably
  restricted~$X$~\cite{Forster_et_al:CSL.2023:Density}. For finitary
  set functors such as~$\power_\mathit{fin}$ or $\dist$, it suffices
  to prove self-separation on finite~$X$. Additional details are as
  follows.
  \begin{enumerate}[wide]
  \item In the case of unlabelled transition systems, %
    we are given an equivalence relation~$R$ on a finite set~$X$, a
    set~$S\subseteq\set(X,2)$ that is initial for~$R$ and closed under
    Boolean combinations, and $A,B\in\power_{\mathit{fin}}(X)$ that are
    distinguished by some predicate
    $\Diamond_X f\colon\power_{\mathit{fin}}(X)\to 2$ where~$f$ is
    invariant under~$R$. We then have to show that~$A,B$ are
    distinguished by $\Diamond g$ for some $g\in S$. But by functional
    completeness of Boolean logic and because~$X$ is finite,~$S$ is in
    fact the set of \emph{all} $R$-invariant functions $X\to 2$, so we
    can just take~$g=f$.
  \item The argument is similar for probabilistic transition systems,
    with some additional considerations necessitated by the
    quantitative setting. We are now given a
    set~$S\subseteq\set(X,[0,1])$ that is initial for~$d$ and closed
    under propositional operators as per
    \cref{ex:evaluation-maps-logic}.\ref{item:prob}. By a variant of
    the Stone-Weierstraß theorem, this implies that~$S$ is dense in
    the space of non-expansive maps $(X,d)\to[0,1]$
    (see~\cite{WildEA18}), which means that in an argument as in the
    previous item, we can take~$g$ to range over functions in~$S$ that
    approximate the given non-expansive
    function~$f\colon (X,d)\to[0,1]$ arbitrarily closely, using
    additionally that the predicate lifting induced by~$\ev_E$ as in
    \cref{ex:evaluation-maps-logic}(\ref{item:prob}) is
    non-expansive~\cite{WildSchroder22}.
  \end{enumerate}
\end{example}

\subsection{Towards a Generic Behaviour Function}
\label{subsec:gen-behaviour}

Building on %
the previous section, we define the behaviour function
$\be_X\colon \Phi X \to \Phi X$ as
$\be_X = \alpha_X \circ \lo_X \circ \gamma_X$ and -- under the
assumption of compatibility -- we have\footnote{Note that the
  adjunction defined in \cref{thm:fibredadjoint} is
  contravariant. Hence the least fixpoint of $\be$ from
  \cref{thm:fixpoint-preservation} becomes the greatest fixpoint
  $\nu \be_X$ wrt.\ the lattice order $\preceq$.}
$\alpha_X(\mu \lo_X) = \nu \be_X$. In other words, the notions of logical and
behavioural conformances coincide.

This motivates a closer investigation of $\nu\be_X$: in what sense
does it coincide with known behavioural equivalences or metrics?
Defining a behavioural conformance (that is, an element of $\Phi X$)
in a fibrational setting is typically done by taking the greatest
fixpoint of a function defined in two steps: the lifting of a
conformance $\Phi X$ to $\Phi(FX)$, followed by a reindexing
via~$c$.  Here, we consider
Kantorovich-style~\cite{bbkk:coalgebraic-behavioral-metrics} or
codensity~\cite{CodensityGames} liftings based on the evaluation
maps. Kantorovich liftings have originally been used to lift metrics
on a set~$X$ to metrics of probability distributions over $X$. In the
probabilistic case, an alternative characterization is given via
optimal transport plans in transportation theory (earth mover's
distance) \cite{v:optimal-transport}.

We use the natural transformation $\Lambda$ introduced earlier and
consider the composite
$\kant_X =\alpha_{FX}\circ \Lambda_X \circ \gamma_X$, the mentioned
Kantorovich lifting. %
If $F=\dist$ and the evaluation map is expectation, then we
obtain exactly the classical Kantorovich lifting.

Given a coalgebra $c\colon X \to FX$, we use the \emph{behaviour
  function} $\be_X = c^* \circ \kant_X \land\ \alpha_X(\Theta_X)$;
here, $\Theta_X$ is a set of constants as in \cref{subsec:gen-logic}
(Asumption~\ref{assum:ConstantsExist}).

\begin{example}
  We derive the %
  behaviour functions for %
  Examples~\ref{ex:evaluation-maps-logic} and~\ref{ex:compatibility}.
  \begin{enumerate}[wide]
  \item In the case $F=\power$ and $\V=2$, the lifting
    $\kant_X(R) \subseteq \power X \times\power X$ of an equivalence
    relation $R\subseteq X \times X$ is the Egli-Milner lifting, i.e.\
    $ U \mathrel{\kant_X(R)} V \iff \forall_{x\in U}\exists_{y\in V}
    x\mathrel R y \land \forall_{y\in V} \exists_{x\in U}\ x\mathrel R
    y.  $ It is well-known that the greatest fixpoint of
    $\be_X=c^*\circ\kant_X$ is precisely Park-Milner bisimilarity.
  \item In the case $FX=\dist X+1$ and $\V=\left[0,1\right]$, we
    obtain the lifting $\kant_X(d)\in\pmet(\dist X+1)$ of a
    pseudometric $d\in\pmet(X)$. It is easy to see that
    $\kant_X(d)(p_1,p_2)$ is the distance given by the classical
    Kantorovich lifting of $d$ if $p_1,p_2\in\mathcal{D}X$. If
    $p_1=\checkmark=p_2$, then the distance is~$0$, otherwise~$1$.
    The least fixpoint (under the usual order on $\left[0,1\right]$)
    of the behaviour function $\be_X=c^*\circ\kant_X$ agrees with
    standard notions of bisimulation distance
    (e.g.~\cite{bw:behavioural-pseudometric}).
  \end{enumerate}
\end{example}

\noindent We conclude the section by showing that behaviour functions
defined in this way are actually the ones obtained from the logic
function.  For the diagram underlying the proof see
\cref{fig:AdjointSetup}.

\begin{figure}
  \centering
\begin{tikzcd}
  \power(\Ccat(X,\tobject))
    \arrow[rr,"\alpha_X",yshift=1ex,"\perp"']
    \arrow[d,"\Lambda_X"]
    \arrow[loop, out=105, in = 50, looseness=4, "\cl'_X",dashed]
    \arrow[loop,dashed,out=155,in=125,looseness=4,"\Theta_X"]
    \arrow[dd,rounded corners, dashed, to path={--([xshift=-2ex]\tikztostart.west) node[left,yshift=-7ex]{$\scriptstyle\lo_X$}
    |- (\tikztotarget.west)}]&&
  \Phi(X)^{\text{op}}
    \arrow[ll,"\gamma_{X}",yshift=-1ex]
    \arrow[d,"\kant_X",dashed]
    \arrow[dd,rounded corners, dashed, to path={--([xshift=2ex]\tikztostart.east) node[right,yshift=-7ex]{$\scriptstyle\be_X$}
    |- (\tikztotarget.east)}]\\
   \power(\Ccat(FX,\tobject))
    \arrow[rr,"\alpha_{FX}",yshift=1ex,"\perp"']
    \arrow[d,"\power(c^\bullet)"]&&
  \Phi(FX)^{\text{op}}
    \arrow[ll,"\gamma_{FX}",yshift=-1ex]
    \arrow[d,"c^*"]\\
   \power(\Ccat(X,\tobject))
    \arrow[rr,"\alpha_X",yshift=1ex,"\perp"']&&
  \Phi(X)^{\text{op}}
    \arrow[ll,"\gamma_{X}",yshift=-1ex]
\end{tikzcd}
  \caption{The adjoint setup with $\scriptstyle\lo_X= \power(c^\bullet)\circ \Lambda_X \circ \cl'_X \cup\ \Theta_X$ and $\scriptstyle\be_X= c^* \circ \kant_X \cup\ \alpha_X(\Theta_X)$.}\label{fig:AdjointSetup}
\end{figure}

\begin{theoremrep}
  \label{thm:genHML}
  Assume that $\cl'_X$ is a subclosure of $\cl_X$ and compatible. For
  a set\/~$\Theta_X$ of constants and a coalgebra $c\colon X\to FX$,
  the logic function
  $ \lo_X=\power (c^\bullet) \circ \Lambda_X \circ \cl'_X \cup \
  \Theta_X$ induces
  $\be_X = c^*\circ \kant_X\land\ \alpha_X(\Theta_X)$, i.e.,
  $\alpha_X \circ \lo_X \circ \gamma_X = \be_X$.
\end{theoremrep}

\begin{proof}
  Define $\lo_1$, $\lo_2$ as in the proof of
  \cref{prop:compatibility}. Then we have:
  \begin{align*}
    \alpha_X \circ \lo_X \circ \gamma_X =&\ \alpha_X \circ \lo_1 \circ \gamma_X \land \alpha_X \circ \lo_2 \circ \gamma_X \\
    =&\ \alpha_X \circ \power (c^\bullet) \circ \Lambda_X \circ \cl'_X \circ \gamma_X \land\ \alpha_X \circ \Theta_X \circ \gamma_X \\
    & \quad (\because \cl'\circ \cl =\cl, \text{\cref{lem:char-subclosure}})\\
    =&\ \alpha_X \circ \power (c^\bullet) \circ \Lambda_X \circ \gamma_X \land\ \alpha_X(\Theta_X) \\
    & \quad (\because \text{$\alpha$ is natural})\\
    =&\ c^*\circ \alpha_{FX} \circ \Lambda_X \circ \gamma_X \land\
    \alpha_X(\Theta_X) \\
    =&\ \be_X. && \qedhere
  \end{align*}
\end{proof}
\noindent Putting everything together via
\cref{thm:fixpoint-preservation}, if $\cl'_X$ is a compatible closure,
then we have $\alpha_X(\mu \lo_X) = \nu \be_X$, that is, logical
conformance coincides with behavioural conformance.

\section{Logics for Quantale-valued Simulation Distances}
\label{sec:quantale-branching-logic}

We next consider a quantitative modal logic~$\qlog$ that we show to be
expressive for similarity distance (a behavioural directed metric)
under certain conditions. Throughout this section, our working
category $\Ccat$ is $\set$, we have a fixed functor $F$ on $\set$, and
a fixed quantale $\tobject=\V$ with distance $d_\V = [\_,\_]$ being
the internal hom. Furthermore $\Phi X = \dpmet_\V(X)$. In this section
we assume naturality of $\gamma$, which holds for the quantales given
in \cref{ex:quantale}.
\[
  \varphi\in \qlog ::= \bigwedge\nolimits_{i\in I} \varphi_i \ \mid\
  \varphi \otimes v \ \mid \ d_\V(v,\varphi)\ \mid\ [\lambda]\varphi
  \quad (\text{for $v\in\V,\lambda\in\Lambda,I\in\set$})
\]
This logic is the positive fragment of quantale-valued coalgebraic
modal logic~\cite{WildSchroder21,Forster_et_al:CSL.2023:Density}, and
generalizes logics for real-valued simulation
distance~\cite{WildSchroder22} to the quantalic setting. The first
three operators are regarded as propositional operators of $\cl'$,
while the~$[\lambda]$ are the modalities. We do not use explicit
constants ($\Theta_X = \emptyset$), but note that constant
truth~$\top$ is included as the empty meet. On a coalgebra
$c\colon X \to FX \in\set$, we interpret a formula $\varphi$ as a
function $\sem{\varphi}\colon X \to \V$ as usual (by structural
induction on terms). Note that we do not allow negation, as we aim to
characterize similarity distance. Disjunction
could be included but, as in the two-valued
case~\cite{g:linear-branching-time}, is not needed to characterize
simulation. %
Meet is infinitary, so the logic function $\lo$ does not reach its
least fixpoint in $\omega$ steps.

The next results show that the three operators (as well as join) are
all non-expansive and hence $\cl'$ is a subclosure of $\cl$
(cf.~\cref{lem:charClosure}), which moreover is compatible.

\begin{propositionrep}
  \label{prop:directed-gamma-closure}
  Infinitary meets, infinitary joins, negative scaling $d_\V(v,\_)$,
  and positive scaling $\_ \otimes v$ (for $v\in \V$) are
  non-expansive.
\end{propositionrep}

\begin{proof}
  For the non-expansiveness of arbitrary meets, let $v_i,w_i\in\V$
  ($i\in I$). We need to show that
  $d_\V(v_i,w_i) \leq d_\V(\bigwedge_{i\in I}
  v_i,\bigwedge_{i\in I} w_i)$. By reflexivity we have
  $d_\V(v_i,w_i) \leq d_\V(v_i,w_i)$ for each $i\in
  I$. Then, by $v\otimes\_ \dashv d_\V(v,\_)$, we have
  $d_\V(v_i,w_i)\otimes v_i \leq w_i$ for each $i$. Thus, taking
  infima both sides and since infimum is order preserving we get
  \[
  \bigwedge_{i\in I} (d_\V(v_i,w_i) \otimes v_i) \leq \bigwedge_{i\in I} w_i.
  \]
  Moreover since $v\otimes\_$ is order preserving we have
  \[
    d_\V(v_i,w_i) \otimes \bigwedge_{i\in I} v_i \leq \bigwedge_{i\in I}
    (d_\V(v_i,w_i) \otimes v_i).
  \]
  So by transitivity of $\leq$ we get
  $ d_\V(v_i,w_i) \otimes \bigwedge_{i\in I} v_i \leq
  \bigwedge_{i\in I} w_i$, which by
  $v\otimes\_ \dashv d_\V(v,\_)$ is equivalent to
  $d_\V(v_i,w_i) \leq d_\V(\bigwedge_{i\in I}
  v_i,\bigwedge_{i\in I} w_i)$.

  For the non-expansiveness of arbitrary joins, we prove, using
  distributivity of $\otimes$ over join:
  \begin{align*}
    & \forall_{i\in I}\
    d_\V(v_i,w_i) \leq d_\V(v_i,w_i)\\
    \iff &\ \forall_{i\in I}\ d_\V(v_i,w_i) \otimes v_i \leq w_i\\
    \implies &\ \bigvee_{i\in I} (d_\V(v_i,w_i) \otimes v_i) \leq \bigvee_{i\in I} w_i\\
    \iff &\ d_\V(v_i,w_i) \otimes \bigvee_{i\in I} v_i \leq \bigvee_{i\in I} w_i\\
    \iff &\ d_\V(v_i,w_i) \leq d_\V(\bigvee_{i\in I} v_i,\bigvee_{i\in
      I} w_i).
  \end{align*}

  For non-expansiveness of negative scaling, observe that
  $d_\V(v_1,v_2) \leq d_\V(d_\V(v,v_1),d_\V(v,v_2))$ by the triangle
  inequality. From commutativity of $\otimes$ it follows that
  $d_\V(v_1,v_2) \otimes d_\V(v,v_1) \leq d_\V(v,v_2)$. In the last
  step we use the adjunction property to obtain:
  $d_\V(v,v_1) \otimes d_\V(v_1,v_2) \leq d_\V(v,v_2)$.

  For non-expansiveness of positive scaling, we again start from
  $d_\V(v_1,v_2)\le d_\V(v_1,v_2)$ (reflexivity). Then, using the
  adjunction property, we obtain $d_\V(v_1,v_2)\otimes v_1\le v_2$.
  This implies that
  $d_\V(v_1,v_2)\otimes v_1\otimes v\le v_2\otimes v$.  And again
  using the adjunction property we can conclude that
  $d_V(v_1,v_2)\le d_\V(v_1\otimes v,v_2\otimes v)$.
\end{proof}

\begin{toappendix}
  In order to show compatibility of $\cl_X'$, we need the following
  result that rewrites a predicate $p\in \cl_X$ in a certain normal
  form.

\begin{proposition}
  Let $S \subseteq \set(X,\V)$. Then every non-expansive predicate
  $p\in\cl_X (S)$ can be rewritten as $\bigvee_{x\in X} p_x$ where
  $p_x(y)=\big(\bigwedge_{h\in S} d_\V(h(x),h(y)) \big)\otimes p(x)$.
\end{proposition}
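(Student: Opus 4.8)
The plan is to unfold the hypothesis $p\in\cl_X(S)$ into an elementary inequality in the quantale~$\V$, and then verify the claimed identity pointwise by a two-sided estimate. Write $d:=\alpha_X(S)\in\dpmet_\V(X)$, so that $d(x,y)=\bigwedge_{h\in S}d_\V(h(x),h(y))$ and $p_x(y)=d(x,y)\otimes p(x)$. By the definition of $\cl_X=\gamma_X\circ\alpha_X$ and of $\gamma_X$ (\cref{thm:fibredadjoint}), the assumption $p\in\cl_X(S)$ says exactly that $d\preceq p^*(d_\V)$, i.e.\ $d(x,y)\le d_\V(p(x),p(y))=[p(x),p(y)]$ for all $x,y\in X$. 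Applying the quantale adjunction $-\otimes p(x)\dashv[p(x),-]$, this is equivalent to
\[
  p_x(y)=d(x,y)\otimes p(x)\ \le\ p(y)\qquad\text{for all }x,y\in X .
\]

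From here the identity $p=\bigvee_{x\in X}p_x$ follows from two inequalities. For $\bigvee_{x\in X}p_x\preceq p$: since joins are monotone, joining the displayed inequality over all $x$ gives $\bigvee_{x\in X}p_x(y)\le p(y)$ for every $y$. For the converse $p\preceq\bigvee_{x\in X}p_x$, I would single out the summand indexed by $x=y$. Since $\alpha_X(S)$ lies in $\dpmet_\V(X)$, it is reflexive, so $d(y,y)=\bigwedge_{h\in S}[h(y),h(y)]\ge 1$ (each $[a,a]\ge 1$ because $1\otimes a=a\le a$); monotonicity of $-\otimes p(y)$ then yields $p_y(y)=d(y,y)\otimes p(y)\ge 1\otimes p(y)=p(y)$, whence $p(y)\le p_y(y)\le\bigvee_{x\in X}p_x(y)$. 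Combining the two estimates gives $p=\bigvee_{x\in X}p_x$ pointwise, as claimed.

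I do not expect a genuine obstacle here. The only points requiring care are the bookkeeping of the reversed (quantalic) order when unwinding the definitions of $\cl_X$, $\gamma_X$ and non-expansiveness, and the observation that $\alpha_X(S)$ really is a \emph{directed} $\V$-pseudometric, so that the reflexivity inequality $d(y,y)\ge 1$ is available — this is precisely what forces the $x=y$ summand to dominate $p(y)$. The degenerate case $S=\emptyset$, where $d\equiv\top$, is subsumed by the same argument since $\top\ge 1$.
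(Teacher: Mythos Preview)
Your proof is correct. It differs genuinely from the paper's argument: the paper invokes the standing naturality assumption on~$\gamma$ to write $p=op\circ\langle S\rangle$ for a non-expansive operator $op\colon(\V^S,d_{\V^S})\to(\V,d_\V)$ (via \cref{lem:charClosure}), then identifies $op$ with the enriched left Kan extension $\mathrm{Lan}_{\langle S\rangle}p$ and computes its coend formula, finally instantiating at $\langle S\rangle y$. Your argument bypasses all of this machinery with an elementary two-sided estimate: the $\le$ direction is exactly the quantale adjunction applied to the non-expansiveness of~$p$, and the $\ge$ direction picks out the diagonal summand $x=y$ using reflexivity of $\alpha_X(S)$. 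In effect you are verifying directly the identity $\mathrm{Lan}_{\langle S\rangle}p\circ\langle S\rangle=p$ that the paper's coend detour implicitly encodes. Your route is shorter, self-contained, and does not appear to need the naturality of~$\gamma$; the paper's route, on the other hand, exhibits the ambient enriched-categorical structure (the Kan extension viewpoint) that motivates why such a normal form should exist in the first place.
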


\begin{proof}
  Remember that we assume $\gamma_X$ being natural.
  I.e., for any $p\in\cl_X(S)$ there is a non-expansive function
  $op\colon(\V^S,d_\V^S) \to (\V,d_\V)$ such that
  $p=op\circ\langle S\rangle$ (cf.~\cref{lem:charClosure}). In
  particular, the function $op$ is given by the (enriched) left Kan
  extension $\text{Lan}_{\langle S \rangle} p$ of $p$ along
  $\langle S\rangle$. Since $\V$ is cocomplete and has tensors (given
  by $\otimes$), thus using the coend description of left Kan
  extension:
  \[
    \text{Lan}_{\langle S \rangle} p (u) = \bigvee_{x\in X}
    d_\V^S(\langle S \rangle x,u) \otimes p(x) = \bigvee_{x\in X}
    \left(\bigwedge_{h\in S} d_\V(h(x),u(h))\right)\otimes p(x),
  \]
  where $u\colon S\to\V$.

  In particular, for $y\in X$ we have
  \[
    p(y) = \text{Lan}_{\langle S \rangle} p (\langle S \rangle y) =
    \bigvee_{x\in X} (\bigwedge_{h\in S} d_\V(h(x),h(y))) \otimes p(x)
    = \bigvee_{x\in X} p_x(y).\qedhere
  \]
\end{proof}
\end{toappendix}
\begin{propositionrep}
  If %
  each $\lambda\in \Lambda$ is sup-preserving (i.e.
  $\lambda_X (\bigvee P) = \bigvee \power(\lambda_X)(P)$, for every
  subset $P \subseteq \set(X,\V)$,)
  then the sub-closure $\cl'$ of $\cl$ as above is compatible.
\end{propositionrep}

\begin{proof}
  Since $\cl'$ is a subclosure of $\cl$, it suffices to show that
  $\Lambda_X(\cl_X (S)) \subseteq \cl_{FX}(\Lambda_X(\cl_X'(S)))$ (for
  $S\subseteq\set(X,\V)$). Assume
  $\lambda_X(p) \in \Lambda_X(\cl_X(S))$, for some
  $\lambda\in\Lambda,p\in\cl_X(S)$. Using the previous
  proposition %
  we find
  \[
  \lambda_X(p) = \lambda_X(\bigvee_{x\in X} p_{x})
    = \bigvee_{x\in X} \lambda_X (p_x)
  \]
  Clearly, $p_{x} \in \cl_X'(S)$ because
  $d_\V(h(x),h(\_)) \in \cl_X'(S)$ (for $h\in S$) and $\cl_X'$ is closed
  under arbitrary meets and positive scaling. Thus $\lambda_X(p_x)\in \Lambda_X\cl_X'(S)$ (for each $x\in X$); whence $\lambda_X(p_x) \in \cl_{FX}\Lambda_X \cl_X'(S)$. And since full closure (in particular $\cl_{FX}$) is closed under arbitrary joins, we conclude that $\lambda_X(p) \in \cl_{FX}\Lambda_X \cl_X'(S)$.
\end{proof}

\noindent Diamond-like modalities (for powerset or fuzzy powerset) are
typically sup-preserving. In such cases, \cref{thm:genHML} yields
expressiveness of the above logic for similarity distance, defined as
the greatest fixpoint of $\be_X(d) = c^*\circ\kant_X(d)$ where
$\kant_X$ is the directed Kantorovich lifting.

\section{The Adjoint Setup in an Eilenberg-Moore Category}
\label{sec:adjointlogicEM}

While we have seen in the examples of the previous sections that the
framework can be instantiated to coalgebras living in $\set$, thus
providing Hennessy-Milner theorems for bisimilarity, we are now
interested in tackling trace equivalences and trace metrics. To this
end, %
we  work in Eilenberg-Moore categories
\cite{sbbr:powerset-coalgebraically}, which also allows us to
determinize a coalgebra using
the generalized powerset construction
(cf. \ref{subsec:coalgebras-EM}).

In particular, we instantiate the adjoint setup to the category $\alg
T$ of $T$-algebras (for some monad $T$ on $\Ccat$), provide
conditions guaranteeing compatibility, and characterize the behaviour
function. Furthermore, taking inspiration from
\cite{Schroeder2007:ExpressivityCoalgModalLog}, we also introduce a
general syntax for modal formulas that can be interpreted
over coalgebras living in $\alg T$.  As introduced in
\cref{subsec:coalgebras-EM}, we fix a coalgebra $c\colon X\to FTX$
living in $\Ccat$ and its determinization $c^\#\colon LX \to \tilde F
LX$ in $\alg{T}$ via a distributive law $\zeta\colon TF\Rightarrow
FT$.

We assume
an indexed category $\Psi\colon \Ccat^{\text{op}} \to \pos$ (that has
fibred limits) and lift it to the category $\alg T$ of $T$-algebras
by postcomposition, that is, $\Phi = \Psi\circ R$ (thus ensuring \ref{assum:ConfExists}):
\[ \alg{T}^{\text{op}} \xrightarrow{R} \Ccat^{\text{op}}
\xrightarrow{\Psi} \pos.
\]
Here,~$R$ is the forgetful functor in the free-forgetful adjunction
$L\dashv R\colon \Ccat \to \alg{T}$ from
\cref{subsec:coalgebras-EM}. To handle \ref{assum:TruthConfExists}, we fix a truth value object $\tobject\in\Ccat$ equipped with a $T$-algebra structure $o\colon T\tobject \to \tobject$ and $d_\tobject \in \Phi \tobject$.
These assumptions ensure that
\cref{thm:fibredadjoint} becomes applicable.  We will denote the
reindexing for $\Phi$ by $\_^*$, while we overload the notation and
specify the reindexing in both $\Ccat$ and $\alg{T}$ by $\_^\bullet$.

We focus on free algebras $LX = (TX,\mu_X)$ (over $X\in\Ccat$) and
apply \cref{thm:fibredadjoint} to the above-mentioned indexed category
$\Psi \circ R$, which gives the adjoint situations depicted by shaded
rectangles in \cref{fig:AdjointSetupAlgebras}.
We note that the middle hom-set $\alg{T}(LX,(\tobject,o))$ is
isomorphic to $\Ccat(X,\tobject)$ at the top left -- due to the
free-forgetful adjunction -- with the bijection between the respective
powersets witnessed by $\alpha',\gamma'$.
This allows us to define the logic function on the lattice
$\power(\Ccat(X,\tobject))$, which is a simpler structure than
$\power(\alg{T}(LX,(\tobject,o)))$. In particular, formulas can then
be evaluated directly on the state space $X$.

\subsection{Logic and Behaviour Function for Coalgebras in Eilenberg-Moore}
\label{subsec:logic-behaviour-algT}

Recall from \cref{subsec:gen-logic} that we need evaluation maps in
the working category to define a logic function. So, to ensure \ref{assum:EvalsExist}, we assume a set
$\Lambda$ of evaluation maps for $\tilde F$, i.e., a family
$(\tilde F(\tobject,o) \xrightarrow{\ev_\lambda}
(\tobject,o)\in\alg{T})_{\lambda\in\Lambda}$ of algebra
homomorphisms. More concretely, a $\Ccat$-arrow
$\ev_\lambda\colon F\tobject\to \tobject$ is such an algebra
homomorphism if it satisfies
$o\circ T\ev_\lambda = \ev_\lambda\circ Fo\circ\zeta_\tobject$.

As in \cref{sec:adjointlogic}, this %
induces a natural transformation $\Lambda$. Since every homomorphism
is also a map in $\Ccat$, we can define $\Lambda'_X$, the predicate
lifting on $\power \Ccat(X,\tobject)$:
\[
  \power \Ccat(X,\tobject) \xrightarrow{\alpha'_X}
  \power\alg{T}(LX,(\tobject,o)) \xrightarrow{\Lambda_{LX}} \power
  \alg{T}(\tilde FLX,(\tobject,o)) \xrightarrow{\power(R)} %
  \power \Ccat(FTX,\tobject).
\]
Note that $\Lambda'$ is a natural transformation
(since $\alpha,\Lambda$ are natural transformations and $R$ is a
functor); on components it can be easily characterized as
follows: %
\begin{lemmarep}
  \label{lem:lambda-prime}
  We have that
  $\Lambda'_X(S) = \{\ev_\lambda \circ Fo\circ FTh\mid h\in S\}$ where
  $S\subseteq \Ccat(X,\tobject)$.
\end{lemmarep}

\begin{proof}
  Given $S$, $\alpha'_X$ maps every $(h\colon X\to \tobject) \in S$ to
  $o\circ Lh = o\circ Th\colon LX\to \tobject$. (Note that $o$ equals
  $\epsilon_{(\tobject,o)}$, the component for $(\tobject,o)$ of the
  co-unit.)  $\Lambda_{LX}$ maps this arrow to
  $\ev_\lambda\circ\tilde{F}(o\circ Lh) = \ev_\lambda\circ Fo\circ
  FTh$.  %
\end{proof}

That is, $\Lambda'_X(S)$ is obtained by first lifting the predicates
from $\Ccat(X,\tobject)$ to $\Ccat(TX,\tobject)$ via the evaluation
map $o\colon T\tobject\to\tobject$ and then to $\Ccat(FTX,\tobject)$ via
$\ev_\lambda\colon F\tobject\to\tobject$. This process can be seen as
applying a ``double modality'' for $T$ and $F$.

\begin{figure}
  \centering
    \begin{tikzcd}[execute at end picture=
    {\fill[gray,opacity=0.2] (A.north west) -- (B.north east) -- (B.south east) -- (A.south west);
    \fill[gray,opacity=0.2] (C.north west) -- (D.north east) -- (D.south east) -- (C.south west);}
    ]
    \power \Ccat(X,\tobject)
    \arrow[loop above,"\Theta'_X"]
    \arrow[r,"\alpha'_X",yshift=1ex,"\cong"']
    \arrow[dd,"\Lambda_X'" description]
    \arrow[ddd,rounded corners, dashed, to path={--([xshift=-2ex]\tikztostart.west) node[left,yshift=-12ex]{$\scriptstyle\lo'_X$}
    |- (\tikztotarget.west)}]
     & |[alias=A]|
     \power (\alg{T}(LX,(\tobject,o))) \arrow[l,"\gamma'_X",yshift=-1ex]  \arrow[r,"\alpha_{LX}",
     yshift=1ex,"\perp"'] \arrow[d,"\Lambda_{LX}"] \arrow[loop above,"\Theta_{LX}"]
      & |[alias=B]|
      \Psi^{\text{op}}(TX)
      \arrow[l,"\gamma_{LX}",yshift=-1ex]
      \arrow[d,dashed,"\kant_{LX}"]
      \arrow[ddd,rounded corners, dashed, to path={--([xshift=2ex]\tikztostart.east) node[right,yshift=-12ex]{$\scriptstyle\be_{LX}$}
    |- (\tikztotarget.east)}]\\
    &
    |[alias=C]|
    \power (\alg{T}(\tilde FLX,(\tobject,o))) \arrow[r,"\alpha_{\tilde FLX}",yshift=1ex,"\perp"'] \arrow[dd,"\power((c^{\#})^\bullet)"] &
    |[alias=D]|
    \Psi^{\text{op}}(FTX)
    \arrow[l,"\gamma_{\tilde FLX}",yshift=-1ex] \arrow[dd,"(c^{\#})^*"]\\
    \power \Ccat(FTX,\tobject) \arrow[d,"\power(c^\bullet)"]\\
    \power \Ccat(X,\tobject) \arrow[r,"\alpha'_X",yshift=1ex,"\cong"']
    &\arrow[l,"\gamma'_X",yshift=-1ex] \power (\alg{T}(LX,(\tobject,o))) \arrow[r,"\alpha_{LX}", yshift=1ex,"\perp"']& \arrow[l,"\gamma_{LX}",yshift=-1ex] \Psi^{\text{op}}(TX)
  \end{tikzcd}
  \caption{The adjoint setup for algebras, where
    $\scriptstyle\be_{LX}=(c^\#)^*\circ \kant_{LX} \land
    \alpha_{LX}(\Theta_{LX})$ and
    $\scriptstyle\lo'_X=\power(c^\bullet) \circ \Lambda_X' \cup
    \Theta_X$.}\label{fig:AdjointSetupAlgebras}
\end{figure}

We can now invoke the results of the previous chapter and assume that
$\Lambda$ is compatible with the closure induced by the adjunction,
that is, we work without propositional operators (hence $\cl'$, as
mentioned in Assumption~\ref{assum:PropositionalClosure}, is the
identity), only constants, at first sight a strong property. We will
however see in the next section that this always holds when $F$ is a
machine functor and we choose suitable evaluation maps.

The next theorem focusses on free algebras and is partly a corollary
of \cref{prop:compatibility} and \cref{thm:genHML}. However there is a
new component: instead of defining the logic function on (free)
Eilenberg-Moore categories, reindexing via the determinized coalgebra
$c^\#$, it is possible -- as indicated above -- to define it directly
on arrows of type $X\to\tobject$ living in $\Ccat$, reindexing with
$c$. This coincides with the view that formulas should be evaluated on
states in $X$ rather than elements of $TX$. The diagram in
\cref{fig:AdjointSetupAlgebras} outlines how to show this result.

\begin{theoremrep}
  \label{thm:AdjLogHM-EMcase}
  We fix a coalgebra $(c\colon X\to FTX)\in\Ccat$.  Assume that
  $\Lambda_{LX}$ is compatible with the closure $\cl_{LX}$ and fix
  $\Theta_{LX}\subseteq \alg{T}({LX},\tobject)$ to ensure that \ref{assum:ConstantsExist} holds.
  \begin{enumerate}
    \item Then the logic function
  $\lo_{LX} = \power((c^\#)^\bullet)\circ \Lambda_{LX} \cup
    \Theta_{LX} $ is $\cl_{LX}$-compatible.
    \item For the behaviour function
  $ \be_{LX} = (c^\#)^*\circ \kant_{LX} \land\
  \alpha_{LX}(\Theta_{LX})$  (where
  $\kant_{LX} = \alpha_{\tilde{F}{LX}}\circ
  \Lambda_{LX}\circ\gamma_{LX}$), we have $\alpha_{LX}(\mu \lo_{LX}) = \nu \be_{LX}$.
    \item Now define another logic function
  $\lo'_X = \power(c^\bullet)\circ\Lambda'_X\cup \Theta'_X$ with
  $\Theta_{LX} = \alpha'_X(\Theta'_X)$. It holds that
  $ \alpha'_X\circ \lo'_X \circ \gamma'_X = \lo_{LX}$ and we obtain
  $\alpha_{LX}(\alpha_X' (\mu \lo'_X)) = \nu \be_X$.
  \end{enumerate}
\end{theoremrep}

\begin{proof}
  The functions $\lo_{LX}$, $\be_{LX}$ are instantiations of the
  templates in \cref{sec:adjointlogic}, where $\cl'_{LX}$ is
  the identity. Hence $\cl_{LX}$-compatibility of $\lo_{LX}$ follows
  directly from \cref{prop:compatibility} and preservation of
  fixpoints from \cref{thm:genHML}.

  In order to prove that
  $\alpha'_X\circ \lo'_X \circ \gamma'_X = \lo_{LX}$, we first show that
  $\alpha_X' \circ \power (c^\bullet) \circ \Lambda_X' \circ \gamma_X'
  = \power((c^\#)^\bullet) \circ \Lambda_{LX}$.  Since
  $\alpha'_X,\gamma'_X$ are inverse to each other (the adjunction is
  an equivalence), this is equivalent to showing
  $\power (c^\bullet) \circ \Lambda_X' = \gamma_X' \circ
  \power((c^\#)^\bullet) \circ \Lambda_{LX}\circ \alpha_X'$. Hence it
  is sufficient to show that
  $\gamma_X' \circ \power((c^\#)^\bullet) = \power(c^\bullet)\circ
  \power(R)$, since this implies
  \[
    \gamma_X' \circ \power((c^\#)^\bullet) \circ \Lambda_{LX}\circ
    \alpha_X' = \power(c^\bullet)\circ \power(R)\circ
    \Lambda_{LX}\circ \alpha_X' = \power (c^\bullet) \circ \Lambda_X'.
  \]
  Hence let $S\subseteq \alg{T}(\tilde{F}LX,(\tobject,o))$. The map
  $\power(c^\bullet)\circ \power(R)$ transforms all $g\in S$ (which are of
  type $g\colon FTX\to \tobject$ as arrows in $\Ccat$) to
  $g\circ c\colon X\to \tobject$ in $\Ccat$.  Instead
  $\gamma_X' \circ \power((c^\#)^\bullet)$ transforms such $g$ first
  to EM maps $g\circ c^\#\colon TX\to \tobject$ and then to
  $g\circ c^\#\circ \eta_X\colon X\to \tobject$ living in $\Ccat$ (note
  that $\eta_X$ is the unit of free-forgetful adjunction
  $L\dashv R\colon \Ccat \to \alg{T}$). Then
  $g\circ c = g\circ c^\#\circ \eta_X$ follows, since $c^\#$ is the
  transpose of $c$.

  From this $\alpha'_X\circ \lo'_X \circ \gamma'_X = \lo_X$ follows,
  since $\alpha'_X$ preserves unions and
  $\Theta_{LX} = \alpha'_X(\Theta'_X)$.

  This implies $\alpha'_X\circ \lo'_X = \lo_X\circ \alpha'_X$, since
  $\alpha'_X,\gamma'_X$ are inverse to each other. Hence
  $\alpha'_X(\mu \lo'_X) = \mu \lo_X$ according to
  \cref{thm:fixpoint-preservation}.

  We can conclude by observing that $\alpha_{LX}(\alpha'_X(\mu
  \lo'_X)) = \alpha_{LX}(\mu \lo_X) = \nu \be_X$.
\end{proof}

We hence consider a simple logic $\EMlog$ for $\alg{T}$, where $T$ is
a monad on $\set$:
\[
  \varphi\in \EMlog ::= \theta \ \mid\ [\lambda] \varphi \qquad
  (\text{where $\theta\in\Theta,\lambda\in\Lambda$})
\]
Given a coalgebra $c\colon X \to FTX \in \set$, each formula
$\varphi \in \EMlog$ is interpreted as a function
$\sem \varphi \colon X \to \tobject$, which is defined by
structural induction as follows:
\begin{itemize}
\item Let $\varphi = \theta$. Then $\sem \varphi$ is given by a
  predefined constant $X\to \tobject$. %
\item Let $\varphi = [\lambda]\varphi'$. Then
  $\sem \varphi = \ev_\lambda\circ Fo\circ FT\sem \varphi\circ c$
  (see definition of $\Lambda'_X$ in \cref{lem:lambda-prime}).
\end{itemize}

\begin{corollary}
  \label{cor:logic-expressive}
  Under the requirements of \cref{thm:AdjLogHM-EMcase}, the
  logic $\EMlog$ is expressive for the behavioural conformance
  $\be_{LX}$, i.e.,
  $
    \alpha_{LX}(\alpha_{X}'(\{\sem \varphi \mid \varphi\in\EMlog\})) =
    \nu \be_{LX}.
  $
\end{corollary}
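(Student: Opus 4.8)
The plan is to unwind the definitions so that \cref{cor:logic-expressive} becomes a direct consequence of part~(3) of \cref{thm:AdjLogHM-EMcase}. The only real content is the identification of the set $\{\sem\varphi \mid \varphi\in\EMlog\}$ with the least fixpoint $\mu\,\lo'_X$ of the logic function $\lo'_X = \power(c^\bullet)\circ\Lambda'_X \cup \Theta'_X$; once this is established, the claimed equation $\alpha_{LX}(\alpha'_X(\{\sem\varphi\mid\varphi\in\EMlog\})) = \nu\,\be_{LX}$ is literally the conclusion $\alpha_{LX}(\alpha'_X(\mu\,\lo'_X)) = \nu\,\be_{LX}$ (noting $\be_X = \be_{LX}$ in this setting).

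First I would recall, from the grammar $\varphi \in \EMlog ::= \theta \mid [\lambda]\varphi'$ and the accompanying interpretation clauses, that $\sem\theta$ ranges exactly over the constants in $\Theta'_X$, while $\sem{[\lambda]\varphi'} = \ev_\lambda\circ Fo\circ FT\sem{\varphi'}\circ c$, which by \cref{lem:lambda-prime} is precisely an element of $\power(c^\bullet)(\Lambda'_X(\{\sem{\varphi'}\}))$. Hence the set of formula semantics is closed under the operations defining $\lo'_X$, and conversely every element of $\lo'_X(S)$ for $S$ a set of formula semantics is again of the form $\sem\varphi$ — so $\{\sem\varphi\mid\varphi\in\EMlog\}$ is a fixpoint of $\lo'_X$. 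To see it is the \emph{least} fixpoint, I would argue by induction on the structure of $\varphi$: every $\sem\varphi$ lies in $\lo'^{\,n}_X(\emptyset)$ for $n$ the modal depth of $\varphi$ (base case $\varphi=\theta$ gives membership in $\Theta'_X \subseteq \lo'_X(\emptyset)$; the inductive step applies $\power(c^\bullet)\circ\Lambda'_X$ to the predicate obtained for $\varphi'$). Thus $\{\sem\varphi\mid\varphi\in\EMlog\} \subseteq \bigcup_n \lo'^{\,n}_X(\emptyset) \subseteq \mu\,\lo'_X$, and combined with the fixpoint property this yields equality.

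The remaining step is purely bookkeeping: \cref{thm:AdjLogHM-EMcase}(3) gives $\alpha_{LX}(\alpha'_X(\mu\,\lo'_X)) = \nu\,\be_X$, and since $\Phi = \Psi\circ R$ with $R L X$ having underlying object $TX$, the behaviour function $\be_X$ coincides with $\be_{LX} = (c^\#)^*\circ\kant_{LX} \land \alpha_{LX}(\Theta_{LX})$ as displayed in \cref{fig:AdjointSetupAlgebras}. Substituting the set identification into the left-hand side gives exactly the statement of the corollary. I do not anticipate a genuine obstacle here — the theorem does all the heavy lifting (compatibility of $\Lambda_{LX}$, the transfer along the equivalence $\alpha'_X\dashv\gamma'_X$, and the fixpoint-preservation machinery of \cref{thm:fixpoint-preservation}); the only point requiring mild care is making the inductive characterization $\mu\,\lo'_X = \{\sem\varphi\mid\varphi\in\EMlog\}$ precise, in particular checking that the syntactic modal depth indexes the fixpoint approximants correctly and that no propositional closure $\cl'$ intervenes (it is the identity in the Eilenberg--Moore setting, by assumption).
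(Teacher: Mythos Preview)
Your proposal is correct and matches the paper's approach: the corollary is stated without explicit proof, as it follows immediately from \cref{thm:AdjLogHM-EMcase}(3) once one identifies $\{\sem\varphi\mid\varphi\in\EMlog\}$ with $\mu\,\lo'_X$, which you have spelled out carefully (including the observation that $\lo'_X$ preserves unions so that the finite approximants indexed by modal depth exhaust the least fixpoint).
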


\subsection{The Machine Functor}
\label{subsec:machine-functor}

Our next aim is to show that the machine functor has certain natural
evaluation maps ensuring that the predicate lifting is
$\cl$-compatible (one of the conditions of
\cref{thm:AdjLogHM-EMcase}). Throughout this section, we restrict
ourselves to a monad $T$ on $\set$ and fix the \emph{machine functor}
$M=\_^\Sigma \times B$ with $\Sigma\in\set$ and
$(B,b)\in\alg{T}$. Since all monads in $\set$ are strong and $B$ is
endowed with a $T$-algebra structure $b\colon TB\to B$, there is a
canonical distributive law $\zeta$
\cite[Exercise~5.4.4]{jacobs_coalgBook}:
\begin{equation}\label{eq:MachDLaw}
  \zeta_X\colon T(X^\Sigma \times B) \xrightarrow{\langle a \mapsto
    T(\pi_a\circ \pi_1),b\circ T\pi_2\rangle} (TX)^\Sigma \times B,
\end{equation}
where
$(\pi_i)_{i\in \{1,2\}}$ are the usual projections and
$\pi_a \colon X^\Sigma \to X$ is the evaluation map ($\pi_a(g) = g(a)$
where $g\colon \Sigma\to X$). Now let $\tilde M$ be
the lifting of $M$ to $\alg{T}$, induced by the
$\zeta$.
We observe that the evaluation maps suggested by it arise from natural
transformations in the sense of
\cref{lem:LambdaCompatibleForNaturalEvs}.

\begin{propositionrep}
  \label{prop:evalmaps}
  \begin{enumerate}
  \item\label{prop:evalmaps-actions} Let $a\in \Sigma$. Then
    $\eta_a\colon \tilde M\Rightarrow\text{Id}$ given by the
    composite
    $MX \xrightarrow{\pi_1} X^\Sigma \xrightarrow{\pi_a} X$ is a
    natural transformation.
  \item\label{prop:evalmaps-observations} Let
    $f\colon (B,b) \to (\tobject,o)$ be a homomorphism.  Then
    $\eta'_f\colon \tilde M \Rightarrow\mathbf{\tobject}$ given by the
    composite $MX \xrightarrow{\pi_2} B \xrightarrow f \tobject$ is
    a natural transformation.
  \end{enumerate}
  \noindent Thus, $\ev_a = \eta_\tobject$, $\ev_f = \eta'_\tobject$
  satisfy the properties of \cref{lem:LambdaCompatibleForNaturalEvs},
  and if each evaluation map is of this form, then $\Lambda$ is
  $\cl$-compatible.
\end{propositionrep}

\begin{proof}
  First remember that every monad on $\set$ is strong.

  Note that in both the cases it is straightforward to see that the
  evaluation maps are natural transformations of type
  $M \Rightarrow \text{Id}$ or $M \Rightarrow \mathbf{\tobject}$;
  thus, it suffices to show that $\ev_a$ and $\ev_f$ are algebra
  maps. This is because the lifted functor $\tilde M$ and $M$ are the
  same on arrows; thus, naturality of $\ev_a,\ev_f$ follows
  directly from the naturality squares in $\set$.

  For the first item, consider the following diagram with $(A,a)\in\alg{T}$:
  \[
  \begin{tikzcd}
    T(A^\Sigma \times B) \arrow[r,"T\pi_1"] \arrow[d,"\langle a \mapsto
    T(\pi_a \circ \pi_1){,}b\circ T\pi_2\rangle"'] & T(A^\Sigma)
    \arrow[r,"T\pi_a"] \arrow[d,"a \mapsto
    T\pi_a"'] %
    & TA \arrow[dd,"a"] \\
    (TA)^\Sigma \times B \arrow[d,"a^\Sigma \times B "'] & (TA)^\Sigma \arrow[d,"a^\Sigma"'] \arrow[ru,"\pi_{a}'"']\\
    A^\Sigma \times B \arrow[r,"\pi_1"] & A^\Sigma
    \arrow[r,"\pi_a"] & A
  \end{tikzcd}
  \]
  where $\pi_a,\pi_a'$ are the obvious evaluation maps. Note that the
  map $a \mapsto T\pi_a$ is the unique map that makes the triangle
  commute. Moreover $a^\Sigma$ is the unique map that makes the
  `lower' square commute. Lastly, the rectangle on the left
  commutes since
  $(a\mapsto T\pi_a ) \circ T\pi_1 = a\mapsto T(\pi_a\circ\pi_1)$.

  For the second item, consider the following diagram with
  $(A,a)\in\alg{T}$ and $(B,b)\xrightarrow
  f(\tobject,o)\in\alg{T}$. The square on the right commutes since $f$
  is an homomorphism, while the rectangle on the left obviously commutes.
  \[
  \begin{tikzcd}
    T(A^\Sigma \times B) \arrow[r,"T\pi_2"] \arrow[d,"\langle a\mapsto
    T(\pi_a \circ \pi_1){,}b\circ T\pi_2\rangle"'] & TB \arrow[r,"Tf"]
    \arrow[dd,"b"'] %
    & T\tobject \arrow[dd,"o"] \\
    (TA)^\Sigma \times B \arrow[d,"a^\Sigma \times \text{id}_B "'] & \\
    A^\Sigma \times B \arrow[r,"\pi_2"] & B \arrow[r,"f"] & \tobject
  \end{tikzcd}
  \]
\end{proof}

\subsection{Alternative Formulation of Kantorovich Lifting}
\label{subsec:alternative-kantorovich}

The behaviour function in \cref{subsec:logic-behaviour-algT} is based
on the generalized Kantorovich lifting~$\kant$~\cite{bbkk:coalgebraic-behavioral-metrics}, which
works as follows: given a pseudometric $d$ on $Y$ (here $Y=TX$),
generate \emph{all} non-expansive functions $Y\to \tobject$ wrt.\ $d$,
lift these functions to $FY\to\tobject$ and from there generate a
pseudometric on $FY$. However, $\kant_{LX}$ -- since it is defined in
an Eilenberg-Moore category -- works subtly differently: it takes
\emph{all non-expansive functions that are algebra
  homomorphisms}. This looks natural in the categorical setting, but
may pose problems if we implement the procedures. The standard
(probabilistic) Kantorovich lifting can for instance be computed based
on the Kantorovich-Rubinstein duality, by determining optimal
transport plans \cite{v:optimal-transport}.

Here, both types of liftings coincide at least on relevant
metrics. To show this result, we first define an alternative way of
lifting, as opposed to defining the lifting on $T$-algebra maps.
Applying \cref{thm:fibredadjoint} on $\Psi$ (rather
than on $\Psi\circ R$) gives the adjunction
$ \alpha_X^\Ccat \dashv \gamma_X^\Ccat \colon \power\Ccat(X,\tobject)
\rightleftarrows \Psi X^{\text{op}}.$ Now consider the lifting
$\kant^\Ccat_{TX} = \alpha_{FTX}^{\Ccat} \circ\Lambda_{TX} \circ
\gamma_{TX}^{\Ccat}$, where
$\Lambda_{TX}\colon\power \Ccat(TX,\tobject) \to \power
\Ccat(FTX,\tobject)$ is defined identically
to $\Lambda_{LX}$.

\begin{theoremrep}
  \label{thm:EquivalentKLiftings}
  Assume that $d$ is preserved by the co-closure, i.e.\
  $d=\alpha_{LX} (\gamma_{LX} (d))$, the co-closure
  $\alpha_X^\Ccat \circ \gamma_X^\Ccat$ is the identity, and each
  evaluation map $\ev_\lambda$ arises from some natural transformation
  either of type $F \Rightarrow \text{Id}$ or
  $F \Rightarrow \mathbf{\tobject}$. Then the two liftings
  $\kant_{LX},\kant^\Ccat_{TX}$ coincide on $d$, i.e.,
  $\kant_{LX}(d) = \kant^\Ccat_{TX}(d)$.
\end{theoremrep}

\begin{proof}
  Note that each evaluation map $\ev_\lambda$, $\lambda\in\Lambda$
  arises from a natural transformation either of type
  $F \Rightarrow \text{Id}$ or $F \Rightarrow \mathbf{\tobject}$. Let
  $K\subseteq \Lambda$ denote those predicate liftings that are of the
  second type.  Now consider the following derivation:
  \begin{align*}
    \kant_{LX} (d) &=\ \alpha_{\tilde{F}LX}(\Lambda_{LX}(\gamma_{LX}(d)))\\
    =&\ \alpha_{\tilde FLX} \left( \bigcup_{\lambda \in\Lambda} \power(\lambda_{LX})(\gamma_{LX}(d))\right)\\
    =&\ \alpha_{\tilde FLX}(\left\{\ev_{\lambda,\tobject} \circ \tilde Fh \mid h\in\gamma_{LX}(d) \land \lambda\in \Lambda \right\}) && \because \text{naturality of $\ev_{\lambda}$}\\
    =&\ \alpha_{\tilde FLX} \Big(\left\{ h \circ \ev_{\lambda,TX}
      \mid h\in\gamma_{LX}(d) \land \lambda \not\in K \right\} \\
    & \qquad\qquad \cup \left\{\ev_{\lambda,TX} \mid \lambda\in K \right\} \Big)&& \text{ definition of $\_^\bullet$}\\
    =&\ \alpha_{\tilde FLX} \Big(\left\{ \ev_{\lambda,TX}^\bullet(h)
      \mid h\in\gamma_{LX}(d) \land \lambda\not\in K \right\} \\
    & \qquad\qquad \cup \left\{\ev_{\lambda,TX} \mid \lambda\in K \right\} \Big)\\
    =&\ \alpha_{\tilde FLX}\Big( \bigcup_{\lambda\not\in K} \power(\ev_{\lambda,TX}^\bullet) (\gamma_{LX}(d)) \cup \bigcup_{\lambda\in K}\{\ev_{\lambda,TX}\} \Big) && \because \text{$\alpha$ is left adjoint}\\
    =&\ \bigwedge_{\lambda\not\in K} \alpha_{\tilde FLX}(\power(\ev_{\lambda,TX}^\bullet)(\gamma_{LX}(d))) \land \bigwedge_{\lambda\in K} \ev_{\lambda,TX}^*(d_\tobject) && \because \text{naturality of $\alpha$}\\
    =&\ \bigwedge_{\lambda\not\in K} \ev_{\lambda,TX}^* (\alpha_{LX}
    (\gamma_{LX}(d))) \land \bigwedge_{\lambda\in K}
    \ev_{\lambda,TX}^*(d_\tobject) && \because \text{$\alpha_{LX}(\gamma_{LX}(d)) = d$} \\
    =&\ \bigwedge_{\lambda\not\in K} \ev_{\lambda,TX}^* (d) \land \bigwedge_{\lambda\in K} \ev_{\lambda,TX}^*(d_\tobject) && \because \alpha_{TX}^\Ccat \circ \gamma_{TX}^\Ccat=\text{id}\\
    =&\ \bigwedge_{\lambda\not\in K} \ev_{\lambda,TX}^* (\alpha_{TX}^\Ccat (\gamma_{TX}^\Ccat(d))) \land \alpha_{FTX}^{\Ccat} \left(\{ \ev_{\lambda,TX}\mid\lambda\in K\}\right) &&\because \text{naturality of $\alpha^\Ccat$}\\
    =&\ \bigwedge_{\lambda\not\in K} \alpha^\Ccat_{FTX}(\power(\ev_{\lambda,TX}^\bullet)(\gamma_X^\Ccat(d))) \land \alpha_{FTX}^{\Ccat} \left(\{ \ev_{\lambda,TX}\mid\lambda\in K\}\right)\\
    =&\ \alpha^{\Ccat}_{FTX} \Big(\bigcup_{\lambda\not\in K} \power(\ev_{\lambda,TX}^\bullet)(\gamma_X^\Ccat(d)) \cup \{ \ev_{\lambda,TX}\mid\lambda\in K\} \Big)\\
    =&\ \alpha^{\Ccat}_{FTX} \circ \Lambda_{TX} \circ
    \gamma_X^{\Ccat}(d) \\
    =&\ \kant_{TX}^\Ccat(d). && \qedhere
  \end{align*}
\end{proof}

\begin{toappendix}
  \begin{lemmarep}
    \label{lem:co-closure-identity}
    Assume that $\Ccat = \set$ and $\Psi X = \dpmet_\V(X)$ (resp.\
    $\Psi = \pmet_\V(X)$) for an integral quantale $\tobject$. Let let
    $d_\tobject = [\_,\_]$ (resp.\ the symmetrized variant of
    $[\_,\_]$). Then $\alpha_X^\Ccat\circ\gamma_X^\Ccat$ is the
    identity.

  \end{lemmarep}

  \begin{proof}
    In order to show that
    $\alpha_X^\Ccat\circ\gamma_X^\Ccat = \mathit{id}_{\Psi X}$, we
    first observe that
    $\alpha_X^\Ccat\circ\gamma_X^\Ccat \preceq \mathit{id}_{\Psi
      (TX)}$ follows from the properties of the adjunction (Galois
    connection). In order to show equality, i.e.,
    $\alpha_X^\Ccat(\gamma_X^\Ccat(d)) = d$ for $d\in\Psi(TX)$, we
    prove that -- given $t_1,t_2\in TX$ -- there exists
    $h\in \gamma_X^\Ccat(d)$ such that
    $d(t_1,t_2) = h^*d_\tobject(t_1,t_2)$. According to
    \cref{lem:non-expansive-function-from-d} $h\colon TX\to \tobject$,
    defined as $h(t) = d(t_1,t)$, is non-expansive wrt.\ $d$ and hence
    contained in $\gamma_X^\Ccat(d)$. Furthermore
    $h^*d_\tobject(t_1,t_2) = d_\tobject(h(t_1),h(t_2)) = d(t_1,t_2)$,
    using \cref{lem:non-expansive-function-from-d}.
  \end{proof}
\end{toappendix}

The condition $d=\alpha_{LX} (\gamma_{LX} (d))$ is a necessary, but
not a serious restriction: this property is typically satisfied by the
$\top$ metric and is preserved by the behaviour function.
Hence during fixpoint iteration this invariant is preserved and the
greatest fixpoints of $\be_X$ based on either version of the
Kantorovich lifting coincide (if the fixpoint is reached in $\omega$
steps).

In addition, if $\Ccat = \set$
and %
$\V$ is an integral quantale, it can easily be shown that the
co-closure $\alpha_X^\Ccat \circ \gamma_X^\Ccat$ is the identity (see
\cref{lem:co-closure-identity} in the appendix).
This enables us to concretely spell out the behaviour function for the
case of the machine functor $M = \_^\Sigma\times B$, provided that the
conformances $\Psi X$ are (directed) pseudometrics.

\begin{theoremrep}
  \label{thm:beh-machine-functor}
  Assume that $\Ccat = \set$, $\Psi X = \dpmet_\V(X)$ (resp.\
  $\Psi X = \pmet_\V(X)$) for an integral quantale $\V$ and
  let $d_\V = [\_,\_]$ (resp.\ the symmetrized variant of
  $[\_,\_]$). Let $d\colon LX\times LX\to \V$ be a pseudometric
  that is preserved by the co-closure
  $\alpha_{LX}\circ\gamma_{LX}$.
  Assume that $M$ is the machine functor and the family of
  evaluation maps is
  \[ \{\ev_a\mid a\in \Sigma\}\cup \{\ev_f\mid f\in
    \mathcal{F}\subseteq \alg{T}((B,b),(\V,o)) \}. \]

  Then the corresponding behaviour function
  $\be_{LX}\colon \Psi (LX) \to \Psi (LX)$ is defined as follows: let
  $t_1,t_2\in LX$ with $c^\#(t_i) = (b_i,g_i)\in B\times LX^\Sigma$:
  \[ \be_{LX}(d)(t_1,t_2) = \bigwedge_{a\in \Sigma}
    d(g_1(a),g_2(a)) \land \bigwedge_{f\in \mathcal{F}}
    d_\V(f(b_1),f(b_2)) \land \bigwedge_{\theta\in\Theta_{LX}}
    d_\V(\theta(t_1),\theta(t_2)), \]
\end{theoremrep}

\begin{proof}
  According to our previous results (\cref{thm:genHML}) the behaviour
  function has the form
  $\be_{LX} = (c^\#)^*\circ\kant_{LX} \land \ \alpha_{LX}(\Theta_{LX})$.

  With the definition of $\alpha$ in \cref{thm:fibredadjoint} is
  straightforward to see that
  \[ \alpha_{LX}(\Theta_{LX})(t_1,t_2) = \bigwedge_{\theta\in\Theta_{LX}}
    \theta^*(d_\V)(t_1,t_2) = \bigwedge_{\theta\in\Theta_{LX}}
    d_\V(\theta(t_1),\theta(t_2)), \]
  matching the third part of the definition of $\be_{LX}$ given in the
  statement of the proposition.

  For the first and second part we have to unravel $\kant_{LX}$:
  according to \cref{thm:EquivalentKLiftings} we have
  \[ \kant_{LX} = \kant^\Ccat_{TX} = \alpha_{FTX}^\Ccat \circ
    \Lambda_{TX} \circ \gamma_{TX}^\Ccat. \] Hence, given
  $s_1,s_2\in TX$ where $s_i = (g_i,b_i)$, $g\colon \Sigma\to TX$,
  $b_i\in B$:
  \begin{align*}
    \kant_{LX}(d)(s_1,s_2) &= \alpha_{FTX}^\Ccat(\Lambda_{TX}(\gamma_{TX}^\Ccat(d)))(s_1,s_2) \\
    &= \bigwedge \{ k^*(d_\V)(s_1,s_2) \mid k\in
    \Lambda_{TX}(\gamma_{TX}^\Ccat(d)) \} \\
    &= \bigwedge \{ d_\V(k(s_1),k(s_2)) \mid k = \ev_\lambda\circ
    Fh, h\in \gamma_{TX}^\Ccat(d) \} \\
    &= \bigwedge \{ d_\V(\ev_\lambda\circ Fh(s_1),\ev_\lambda\circ
    Fh(s_2)) \mid h\colon TX\to \V, d\preceq h^* d_\V \} \\
    &= \bigwedge_{a\in\Sigma} \{ d_\V(\ev_a\circ
    Fh(s_1),\ev_a\circ Fh(s_2)) \mid h\colon TX\to \V, d\preceq h^* d_\V \} \land
    \mbox{} \\
    & \qquad \bigwedge_{f\in\mathcal{F}} \{ d_\V(\ev_f\circ
    Fh(s_1),\ev_f\circ
    Fh(s_2)) \mid h\colon TX\to \V, d\preceq h^* d_\V \}
  \end{align*}
  The property $d\preceq h^* d_\V$ states that $h$ is non-expansive
  (wrt.\ $d,d_\V$). We consider the two parts of the meet
  separately:
  \begin{itemize}
  \item First part of meet: assume that $\lambda=a\in
    \Sigma$. Then
    \[ \ev_a\circ Fh(s_i) = \ev_a\circ Fh(g_i,b_i) =
      h(g_i(a)) \] By non-expansiveness of $h$ it is clear that
    always
    $d_\V(h(g_1(a)),h(g_2(a))) \ge
    d(g_1(a),g_2(a))$. In fact we can choose $h$ such that
    equality holds: we set $h(x) = d(g_1(a),\_)$ and infer from
    \cref{lem:non-expansive-function-from-d} that it is
    non-expansive, i.e., that $d\preceq h^* d_\V$. Observe that
    $h(g_1(a)) = d(g_1(a),g_1(a)) = 1$ (remember that
    $d$ is reflexive and the quantale is integral). In addition
    $h(g_2(a)) = d(g_1(a),g_2(a))$. Hence we obtain
    \[ d_\V(h(g_1(a)),h(g_2(a))) =
      d(g_1(a),g_2(a)), \] again with
    \cref{lem:non-expansive-function-from-d}.

    Hence the infimum is reached in $d(g_1(a),g_2(a))$ and
    the first part of the meet simplifies to
    \[ \bigwedge_{a\in\Sigma} d(g_1(a),g_2(\sigma)). \]
  \item Second part of meet: assume that $\lambda=f\in
    \mathcal{F}$. Then
    \[ \ev_f\circ Fh(s_i) = \ev_f\circ Fh(g_i,b_i) = f(b_i). \]
    This means that the infimum is independent of the choice of $h$
    and simplifies to
    \[ \bigwedge_{f\in\mathcal{F}} d_\V(f(b_1),f(b_2)). \]
  \end{itemize}
  Now set $s_i = c^\#(t_i)$ to include the reindexing via $c^\#$
  and one obtains the characterization of $\be_{LX}$.
\end{proof}
\noindent
The above function $\be$ %
is co-continuous and fixpoint
iteration terminates after $\omega$ steps.

\section{Case Studies for the Linear-time Case}
\label{sec:examples}

\subsection{Workflow}

We recall the parameters of our framework and set out a workflow
that we follow in our case studies.
Let $F$ be a machine functor and $T$ a monad on a category $\Ccat$.
\begin{itemize}
\item Model systems as coalgebras of type
  $c\colon X\to FTX$ with a
  distributive law $\zeta\colon TF\Rightarrow FT$.
\item Fix a truth value object $(\tobject,o)\in \alg{T}$ and  $d_\tobject\in \Psi\tobject$.
\item Define a fibration (indexed category)
  $\Phi = \Psi\circ R\colon \alg{T}^{\text{op}} \to
  \pos$ by fixing an indexed category
  $\Psi\colon \Ccat^{\text{op}}\to \pos$ to define the
  conformances.
\item Fix a set $\Lambda$ of evaluation maps (predicate liftings) as
  homomorphisms $\tilde{F}(\tobject,o)\to (\tobject,o)$.
\item Fix a set of constants
  $\Theta_X\subseteq \Ccat(X,\tobject)$. %
\end{itemize}

Note that the last four conditions correspond to
Assumptions~\ref{assum:ConfExists}-\ref{assum:PropositionalClosure}, %
which are necessary to set up a logic and a behaviour function $\be$
as defined in \cref{subsec:logic-behaviour-algT} and guarantee
expressiveness of the resulting logic. Whenever we choose
$\Psi X = \dpmet_\V(X)$ or $\Psi X = \pmet_\V(X)$ for an integral
quantale $\V$, we can rely on the characterization of the fixpoint
equation in \cref{thm:beh-machine-functor}.

\newcommand{\CasestudyOne}{
\subsection{Trace Equivalence for Labelled Transition Systems}
\label{subsec:trace-equivalence}

We spell out a simple case study: trace equivalence
\cite{g:linear-branching-time} for labelled transition systems. The
main ingredients are summarized in the table below.

\medskip

\noindent
\begin{tabular}{|l|l|}
  \hline
  $\Ccat = \set$, $F=\_^\Sigma$, $T=\power$ & \textbf{Logic:} \\
  \qquad ($B=1$) & evaluation maps: $\ev_a$
  (Prop.~\ref{prop:evalmaps}(\ref{prop:evalmaps-actions})) \\
  $c\colon X\to (\power X)^\Sigma$ & constants $\Theta_X = \{1\}$,
  constant $1$-function \\
  $c^\#\colon \power X\to (\power X)^\Sigma$ & formulas:
  $\varphi = [a_1]\cdots [a_n]1$ \\ \hline $\tobject=\mathbf{2}$
  (Ex.~\ref{ex:quantale}(\ref{ex:quantale-boolean}))
  & \textbf{Behaviour function:} \\
  $o\colon\power\mathbf{2}\to \mathbf{2}$ supremum &
  $\be_{\power X}\colon \Psi(\power X) \to \Psi (\power X)
  $ \\
  $\Psi(X)$: equivalences on $X$ &
  $\be_{\power X}(R)(U,V) = \left(U=\emptyset \Leftrightarrow
    V=\emptyset\right) \land$ \\
  $d_\tobject$: equality on $\tobject$ &
  \qquad\qquad\qquad\qquad\qquad
  $\forall_{a\in\Sigma} (c^\#(U)(a),c^\#(V)(a))\in R$ \\\hline
\end{tabular}

\medskip

The modality $[a]$ boils down to the standard diamond modality (due to
the definition of $o$); a state $x\in X$ satisfies
$\varphi = [a_1]\cdots [a_n]1$ iff there exists a trace
$a_1\cdots a_n$ from $x$. The constant~$1$ is needed to start building
formulas and to distinguish the empty set from a non-empty set on
$LX = \power X$. (Note that
$\Theta_{LX} = \alpha'_X(\Theta_X) = \{\tilde{1}\}$ with
$\tilde{1}(Y)=0$ iff $Y=\emptyset$.)
Its role cannot be taken by a constant modality or an operator, since
those have to be homomorphisms in $\alg{\power}$, hence
sup-preserving.

Expressiveness of trace logic $\EMlog$ now directly follows from
\cref{cor:logic-expressive}.

}

\newcommand{\CasestudyTwo}{
\subsection{Trace Distance for Probabilistic Automata}
\label{subsec:trace-prob-automata}

A \emph{probabilistic automaton} \cite{RABIN1963230} is a quadruple
$(X, \Sigma, \mu, p)$ where for each state $x\in X$ and each possible
action $a\in \Sigma$ there is a probability distribution $\mu_{x,a}$
on the possible successors in $X$, and where each state $x\in X$ has a
payoff value $p(x) \in [0,1]$. Following
\cite{sbbr:powerset-coalgebraically,sbbr:generalizing-determinization-journal},
we model them as coalgebras in the Eilenberg-Moore setting as detailed
in the table below.

\medskip

\noindent
\begin{tabular}{|l|l|}
  \hline
  $\Ccat = \set$, $F=\_^\Sigma\times [0,1]$, $T=\dist$ & \textbf{Logic:} \\
  \qquad ($B=[0,1]$, $b$ expectation) & evaluation maps: $\ev_a$
  (Prop.~\ref{prop:evalmaps}(\ref{prop:evalmaps-actions})) \\
  $c\colon X\to (\dist X)^\Sigma\times[0,1]$ &
  \qquad\qquad\qquad\qquad $\ev_*(f,r)= r$
  (Prop.~\ref{prop:evalmaps}(\ref{prop:evalmaps-observations})) \\
  $c^\#\colon \dist X\to (\dist X)^\Sigma\times [0,1]$ & constants
  $\Theta_X = \emptyset$ \\
  & formulas: $\varphi = [a_1]\cdots [a_n]*$ \\ \hline
  $\tobject=[0,1]$ (Ex.~\ref{ex:quantale}(\ref{ex:quantale-reals})) &
  \textbf{Behaviour function:} \\
  $o\colon \dist[0,1]\to [0,1]$ expectation &
  $\be_{\dist X}\colon \Psi(\dist X)  \to \Psi(\dist X)$ \\
  $\Psi(X) = \pmet_{[0,1]}(X)$ &
  $\be_{\dist X}(d)(p_1,p_2) = \max\{\sup_{a\in \Sigma}
  d(g_1(a),g_2(a)), d_\tobject(r_1,r_2) \}$ \\
  $d_\tobject(r,s) = |r-s|$ & \\ \hline
\end{tabular}

\medskip

Thus, given a formula $\varphi = [a_1]\dots [a_n]*$ and a state
$x\in X$, $\llbracket \varphi\rrbracket(x)$ gives us the expected
payoff after choosing actions according to the word $a_1\dots
a_n$. The distance of two states $x_1,x_2$ is hence the supremum of
the difference of payoffs, over all words. %

Expressiveness again follows from \cref{cor:logic-expressive}.
}

\newcommand{\CasestudyThree}{
\subsection{Directed Fuzzy Trace Distance} %
\label{subsec:fuzzy-mts}

We now consider directed trace distances for weighted transition
systems over a generic quantale.

We work with the ``fuzzy'' monad $T = \mathcal{P}_\V$ (aka $\V$-valued
powerset monad \cite[Remark~1.2.3]{hofmann_seal_tholen_2014}) on
$\set$ that is defined as $\mathcal{P}_\V = \V^X$ on objects and as
$Tf(g)(y) = \bigvee_{f(x)=y} g(x)$ (for $f\colon X\to Y$) on
arrows.  Its unit
$\eta_X\colon X\to\mathcal{P}_\V X$ is given by $\eta_X(x)(x') = 1$ if $x=x'$ and $0$ (the empty join) otherwise. Multiplication
$\mu_X\colon \mathcal{P}_\V \mathcal{P}_\V X \to \mathcal{P}_\V X$ is
defined as
$\mu_X(G)(x)=\bigvee_{g\in \mathcal{P}_\V X} G(g)\otimes g(x)$. Note
that for $\V=2$ (cf.~\cref{ex:quantale}(\ref{ex:quantale-boolean}))
$T$ corresponds to the powerset monad $\power$.

\medskip

\noindent
\begin{tabular}{|l|l|}
  \hline
  $\Ccat = \set$, $F=\_^\Sigma$, $T=\mathcal{P}_\V$ & \textbf{Logic:} \\
  \qquad ($B=1$) & evaluation maps: $\ev_a$
  (Prop.~\ref{prop:evalmaps}(\ref{prop:evalmaps-actions})) \\
  $c\colon X\to (\mathcal{P}_\V X)^\Sigma$ & constants
  $\Theta_X = \{1\}$, constant $1$-function \\
  $c^\#\colon \mathcal{P}_\V X\to (\mathcal{P}_\V X)^\Sigma$ & formulas: $\varphi =
  [a_1]\cdots [a_n]1$ \\
  \hline $\tobject=\V$, $o\colon \mathcal{P}_\V \V \to \V$ &
  \textbf{Behaviour function:} \\
  \qquad\qquad $g\mapsto \bigvee_{v\in \V} g(v)\otimes v$ &
  $\be_{\mathcal{P}_\V X}\colon \Psi(\mathcal{P}_\V X )  \to \Psi(\mathcal{P}_\V X )$ \\
  $\Psi(X) = \dpmet_{\V}(X)$ &
  $\be_{\mathcal{P}_\V X}(d)(g_1,g_2) =  \bigwedge_{a\in \Sigma}
  d(c^\#(g_1)(a),c^\#(g_2)(a))\ \land$ \\
  $d_\tobject(v,v') = [v,v']$ & \qquad\qquad\qquad\qquad\qquad
  $\left[\bigvee_{x\in X} g_1(x),
  \bigvee_{x\in X} g_2(x)\right]$ \mystrut \\
  \hline
\end{tabular}

\medskip

Evaluating a formula $\varphi = [a_1]\dots [a_n]1$ on a state
$x_0\in X$ results in
\[ \sem\varphi(x_0) = \bigvee \{ \bigotimes_{0\leq i <n}
  c(x_i)(a_{i+1})(x_{i+1}) \mid x_1\dots x_n\in X^n \}. \]
This follows directly from structural induction on $\varphi$, from
distributivity and from the evaluation of the modality $[a]\varphi'$:
\[
\sem{[a]\varphi'}(x) = \bigvee_{y\in X} c(x)(a)(y) \otimes \sem{\varphi'}(y).
\]
Intuitively we check how well $x$ can match the trace $a_1\dots a_n$,
where $c(x)(a)(y)$ measures the degree to which $x$ can make an
$a$-transition to $y$.

The second part of the minimum in the definition of $\be$ stems from
the constants $\Theta_X = \{1\}$, since
$\Theta_{LX} = \alpha'_X(\Theta_X) = \{\tilde{1}\}$ with
$\tilde{1}(h) = \bigvee_{x\in X} h(x)$ for $h\colon X\to
\V$. Without it, the fixpoint iteration would stabilize at the
constant $1$-pseudometric.

Expressiveness again follows from
\cref{cor:logic-expressive}. Expressiveness of a logic for
\emph{symmetric} fuzzy trace distance has already been shown in
previous work~\cite{fswbgkm:graded-logics-em}.
}

\CasestudyOne
\CasestudyTwo
\CasestudyThree

\section{Conclusion}
\label{sec:conclusion}

\noindent\emph{Related work.} By now there is a large number of papers
considering coalgebraic semantics beyond branching-time, for instance
\cite{hjs:generic-trace-coinduction,sbbr:powerset-coalgebraically,mps:trace-semantics-graded-monads,c:branching-linear-coalgebra}.
Furthermore in the same period quite a wealth of results on the
treatment of behavioural metrics in coalgebraic generality has been
published
\cite{bw:behavioural-pseudometric,bbkk:coalgebraic-behavioral-metrics,kkkrh:expressivity-quantitative-modal-logics,Forster_et_al:CSL.2023:Density}. However,
there is little work combining both linear-time semantics and
behavioural metrics in the setting of coalgebra. In that respect we
want to mention \cite{fswbgkm:graded-logics-em} that is based on the
graded monad framework \cite{mps:trace-semantics-graded-monads} and
which investigates exactly this combination. However, different from
the present paper, the focus is on the expressiveness of the logics
with respect a graded semantics (that intuitively specifies the traces
of a state). Hence, using the classification of the introduction, it
studies the relationship of~(i) and~(ii).

Unlike other approaches, our main focus is on exploiting an adjunction
(Galois connection) and fixpoint preservation results to obtain
Hennessy-Milner theorems ``for free''. We start by setting up a logic,
characterizing the behavioural equivalence, and investigate under
which circumstances we can derive a corresponding fixpoint
characterization. The fixpoint equation might be defined on an infinite state
space, but often there are finitary techniques that can be employed,
such as reducing the state space to a finite subset, linear
programming, up-to techniques, etc. In particular, for systems as in
\cref{subsec:fuzzy-mts} we are working on promising results (based on
\cite{bkp:abstraction-up-to-games-fixpoint}) for deriving bounds for
behavioural distances via finite witnesses using up-to techniques,
even for infinite state spaces. The algorithmic angle of our approach
is not yet fully worked out in the present paper but establishing
fixpoint equations as we do here is a necessary first step in this
direction.

Note that our concept deviates %
from the dual adjunction approach
\cite{k:coalgebraic-logic-beyond-sets,kp:coalgebraic-modal-logics-overview,kr:logics-coinductive-predicates,Pavlovic2006:DualAdj}
to coalgebraic modal logic. There the functor on the ``logic
universe'' characterizes the \emph{syntax} of the logics, while the
semantics is instead given by a natural transformation. Nevertheless, it complements (at least when restricted to the classical case of Boolean predicates) the recent approach \cite{turkenburg_et_al:LIPIcs.CALCO.2023.6} that combines fibrations in the dual adjunction setup since having contravariant Galois connections between fibres (at a `local' level) is equivalent to having dual adjunctions between certain fibred categories (at a `global' level). It is unclear how to establish this correspondence in the setting of quantitative $\V$-valued predicates.

\smallskip

\noindent\emph{Future work.} Currently the operators of the logic,
given by $\cl'$, are rather generic, although we instantiated them in
special cases to ensure expressiveness (see in particular
Sections~\ref{sec:quantale-branching-logic}
and~\ref{subsec:machine-functor}). We envision a general theory to
ensure expressiveness of the logics, similar to Post's functional
completeness theorem \cite{pm:post-functional-completeness}, which
characterizes complete sets of operators for the boolean case. This
question is strongly related to the notion of an approximating family
in \cite{kkkrh:expressivity-quantitative-modal-logics} that has again
close connections to compatibility as discussed in
\cite{bgkm:hennessy-milner-galois}.

We will also study the condition requiring that a conformance
(pseudometric) is preserved by the co-closure
($d=\alpha_{LX} (\gamma_{LX} (d))$). Previous results
\cite{bgkm:hennessy-milner-galois} suggest that this is related to the
notion of (metric) congruence, as e.g.\ defined in
\cite{Bonchi_et_al:PowerConvexAlg}, but the connection seems to be
non-trivial.

Another avenue of research is to further investigate the
quantale-valued logic for the branching case introduced in
\cref{sec:quantale-branching-logic}, to extend it to the undirected
case and restrict to finitary operators.

\bibliographystyle{plain} \bibliography{references}

\newpage

\appendix

\end{document}